\RequirePackage[l2tabu,orthodox]{nag}
\documentclass
[letterpaper,11pt,]
{article}

\usepackage{etex}
\usepackage{verbatim}
\usepackage{xspace,enumerate}
\usepackage[dvipsnames]{xcolor}
\usepackage[T1]{fontenc}
\usepackage[full]{textcomp}
\usepackage[american]{babel}
\usepackage{mathtools}

\usepackage{graphicx}
\usepackage{pgfplots}
\usepackage{amsmath}

\usepackage{amsthm}
\usepackage[
letterpaper,
top=1in,
bottom=1in,
left=1in,
right=1in]{geometry}
\usepackage{newpxtext} %
\usepackage{textcomp} %
\usepackage[varg,bigdelims]{newpxmath}
\usepackage[scr=rsfso]{mathalfa}%
\usepackage{bm} %
\let\mathbb\varmathbb
\usepackage{microtype}
\usepackage[pagebackref,colorlinks=true,urlcolor=blue,linkcolor=blue,citecolor=OliveGreen]{hyperref}

\usepackage[capitalise,nameinlink]{cleveref}
\crefname{lemma}{Lemma}{Lemmas}
\crefname{fact}{Fact}{Facts}
\crefname{theorem}{Theorem}{Theorems}
\crefname{corollary}{Corollary}{Corollaries}
\crefname{claim}{Claim}{Claims}
\crefname{example}{Example}{Examples}
\crefname{algorithm}{Algorithm}{Algorithms}
\crefname{problem}{Problem}{Problems}
\crefname{definition}{Definition}{Definitions}
\crefname{exercise}{Exercise}{Exercises}
\crefname{condition}{Condition}{Conditions}
\crefname{figure}{Figure}{Figures}
\crefname{observation}{Observation}{Observations}

\newtheorem{theorem}{Theorem}[section]
\newtheorem*{theorem*}{Theorem}

\newtheorem*{lemma*}{Lemma}
\newtheorem{fact}[theorem]{Fact}
\newtheorem*{fact*}{Fact}

\newtheorem*{proposition*}{Proposition}
\newtheorem{corollary}[theorem]{Corollary}
\newtheorem*{corollary*}{Corollary}

\newtheorem*{hypothesis*}{Hypothesis}

\newtheorem*{conjecture*}{Conjecture}
\theoremstyle{definition}
\newtheorem{definition}[theorem]{Definition}
\newtheorem*{definition*}{Definition}

\newtheorem*{construction*}{Construction}

\newtheorem*{example*}{Example}

\newtheorem*{question*}{Question}
\newtheorem{algorithm}[theorem]{Algorithm}
\newtheorem*{algorithm*}{Algorithm}

\newtheorem*{assumption*}{Assumption}

\newtheorem*{problem*}{Problem}

\newtheorem*{openquestion*}{Open Question}
\newtheorem{observation}[theorem]{Observation}
\newtheorem*{observation*}{Observation}
\theoremstyle{remark}

\newtheorem*{claim*}{Claim}

\newtheorem*{remark*}{Remark}
\usepackage{paralist}
\let\originalleft\left
\let\originalright\right
\renewcommand{\left}{\mathopen{}\mathclose\bgroup\originalleft}
\renewcommand{\right}{\aftergroup\egroup\originalright}
\usepackage{turnstile}
\usepackage{mdframed}
\usepackage{tikz}
\usetikzlibrary{positioning}
\usepackage{caption}
\DeclareCaptionType{Algorithm}
\usepackage{newfloat}
\usepackage{float}
\usepackage{array}
\usepackage{booktabs}
\usepackage{subfig}
\usepackage{bbm}
\usepackage{xparse}
\usepackage{amsthm} %
\makeatletter
\let\latexparagraph\paragraph
\RenewDocumentCommand{\paragraph}{som}{%
  \IfBooleanTF{#1}
    {\latexparagraph*{#3}}
    {\IfNoValueTF{#2}
       {\latexparagraph{\maybe@addperiod{#3}}}
       {\latexparagraph[#2]{\maybe@addperiod{#3}}}%
  }%
}
\newcommand{\maybe@addperiod}[1]{%
  #1\@addpunct{.}%
}
\makeatother

\newcommand{\Authornotecolored}[3]{}
\newcommand{\Authorcomment}[2]{}
\newcommand{\Authorfnote}[2]{}

\usepackage{boxedminipage}
\newcommand{\paren}[1]{(#1)}
\newcommand{\Paren}[1]{\left(#1\right)}

\newcommand{\brac}[1]{[#1]}
\newcommand{\Brac}[1]{\left[#1\right]}

\newcommand{\abs}[1]{\lvert#1\rvert}

\newcommand{\set}[1]{\{#1\}}
\newcommand{\Set}[1]{\left\{#1\right\}}

\newcommand{\norm}[1]{\lVert#1\rVert}
\newcommand{\Norm}[1]{\left\lVert#1\right\rVert}

\newcommand{\normt}[1]{\norm{#1}_2}

\newcommand{\iprod}[1]{\langle#1\rangle}
\newcommand{\Iprod}[1]{\left\langle#1\right\rangle}

\newcommand{\Esymb}{\mathbb{E}}

\DeclareMathOperator*{\E}{\Esymb}

\newcommand\bdot\bullet

\DeclareMathOperator{\poly}{poly}

\newcommand{\iid}{i.i.d.\xspace}

\newcommand{\N}{\mathbb N}
\newcommand{\R}{\mathbb R}

\newcommand{\cA}{\mathcal A}
\newcommand{\cB}{\mathcal B}

\newcommand{\cD}{\mathcal D}

\renewcommand{\leq}{\leqslant}

\renewcommand{\geq}{\geqslant}

\let\epsilon=\varepsilon
\numberwithin{equation}{section}
\newcommand\MYcurrentlabel{xxx}
\newcommand{\MYstore}[2]{%
  \global\expandafter \def \csname MYMEMORY #1 \endcsname{#2}%
}
\newcommand{\MYload}[1]{%
  \csname MYMEMORY #1 \endcsname%
}
\newcommand{\MYnewlabel}[2][]{%
  \renewcommand\MYcurrentlabel{#2}%
  \if\relax\detokenize{#1}\relax
    \MYoldlabel{#2}%
  \else
    \MYoldlabel[#1]{#2}%
  \fi
}
\newcommand{\MYdummylabel}[2][]{}
\newcommand{\torestate}[1]{%
  \let\MYoldlabel\label%
  \let\label\MYnewlabel%
  #1%
  \MYstore{\MYcurrentlabel}{#1}%
  \let\label\MYoldlabel%
}
\newcommand{\restatetheorem}[1]{%
  \let\MYoldlabel\label
  \let\label\MYdummylabel
  \begin{theorem*}[Restatement of \cref{#1}]
    \MYload{#1}
  \end{theorem*}
  \let\label\MYoldlabel
}
\newcommand{\restatelemma}[1]{%
  \let\MYoldlabel\label
  \let\label\MYdummylabel
  \begin{lemma*}[Restatement of \cref{#1}]
    \MYload{#1}
  \end{lemma*}
  \let\label\MYoldlabel
}
\newcommand{\restateprop}[1]{%
  \let\MYoldlabel\label
  \let\label\MYdummylabel
  \begin{proposition*}[Restatement of \cref{#1}]
    \MYload{#1}
  \end{proposition*}
  \let\label\MYoldlabel
}
\newcommand{\restatefact}[1]{%
  \let\MYoldlabel\label
  \let\label\MYdummylabel
  \begin{fact*}[Restatement of \cref{#1}]
    \MYload{#1}
  \end{fact*}
  \let\label\MYoldlabel
}
\newcommand{\restateobs}[1]{%
  \let\MYoldlabel\label
  \let\label\MYdummylabel
  \begin{observation*}[Restatement of \cref{#1}]
    \MYload{#1}
  \end{observation*}
  \let\label\MYoldlabel
}
\newcommand{\restate}[1]{%
  \let\MYoldlabel\label
  \let\label\MYdummylabel
  \MYload{#1}
  \let\label\MYoldlabel
}

\newcommand{\sse}{\subseteq}

\newcommand{\e}{\epsilon}
\newcommand{\eps}{\epsilon}

\allowdisplaybreaks
\newcommand*{\normop}[1]{\norm{#1}_{\mathrm{op}}}
\newcommand*{\Normop}[1]{\Norm{#1}_{\mathrm{op}}}

\newcommand*{\normf}[1]{\norm{#1}_{\mathrm{F}}}
\newcommand*{\Normf}[1]{\Norm{#1}_{\mathrm{F}}}

\newcommand{\pE}{\tilde{\mathbb{E}}}

\newcommand{\proves}[2]{\sststile{#1}{#2}}

\newcommand{\NP}{\mathrm{NP}}
\newcommand{\DTIME}{\mathrm{DTIME}}

\newcommand{\normtwofour}[1]{\norm{#1}_{2\rightarrow 4}}
\newcommand{\normtwoq}[1]{\norm{#1}_{2\rightarrow q}}
\newcommand{\normpq}[1]{\norm{#1}_{p\rightarrow q}}

\newcommand{\normq}[1]{\norm{#1}_q}

\newcommand{\normtwofourexp}[1]{\norm{#1}_{2\rightarrow \bar{4}}}
\newcommand{\normtwoqexp}[1]{\norm{#1}_{2\rightarrow \bar{q}}}

\newcommand{\normqexp}[1]{\norm{#1}_{\bar{q}}}
\newcommand{\normfourexp}[1]{\norm{#1}_{\bar{4}}}

\newcommand{\normtwoinfty}[1]{\norm{#1}_{2 \rightarrow \infty}}
\newcommand{\normtwotwoexp}[1]{\norm{#1}_{2 \rightarrow \bar{2}}}
\newcommand{\normtwotwo}[1]{\norm{#1}_{2 \rightarrow 2}}

\newcommand{\QMAt}{\mathrm{QMA}(2)}
\newcommand{\EXP}{\mathrm{EXP}}

\newcommand{\ALG}{\mathrm{ALG}}

\def\colorful{0}

\ifnum\colorful=1

\newcommand{\note}[3]{\textcolor{#1}{#2: #3}}
\newcommand{\stefan}[1]{\note{blue}{[Stefan]}{#1}}
\newcommand{\stefantopic}[1]{\note{ForestGreen}{[Stefan]}{#1}}
\newcommand{\sam}[1]{\note{orange}{[Sam]}{#1}}
\newcommand{\samtopic}[1]{\note{ForestGreen}{[Sam]}{#1}}
\else

\newcommand{\note}[3]{}
\newcommand{\stefan}[1]{}
\newcommand{\stefantopic}[1]{}
\newcommand{\sam}[1]{}
\newcommand{\samtopic}[1]{}
\fi

\title{Algorithms with Polynomially-Improved Approximation Factors for the $2 \rightarrow q$ Norm, and Applications }
\newcommand{\authorinfo}[4]{%
  \begin{tabular}[t]{c}
    #1\thanks{#2}\\
    #3\\
    \texttt{#4}
  \end{tabular}%
}

\author{
  \authorinfo{Samuel B. Hopkins}{Supported by NSF award no. 2238080, MLA@CSAIL, FinTechAI@CSAIL, MIT Research Support Committee, MIT-Google Program for Computing Innovation.}{samhop@mit.edu}{MIT}
  \and
  \authorinfo{Stefan Tiegel}{Supported by an SNF Postdoc.Mobility Grant (P500-2 235374).}{stefan23@mit.edu}{MIT}
}
\date{\today}

\date{}

\begin{document}
\maketitle

\thispagestyle{empty}

\begin{abstract}
The $2 \rightarrow q$ norm of a matrix $X \in \R^{n \times d}$ is defined as $\normtwoq{X} = \sup_{\normt{v} = 1} \normq{Xv}$.
We give polynomial-time multiplicative approximation algorithms for this norm when $q > 2$ (i.e. in the hypercontractive setting).
This problem either directly captures or is closely related to long-standing open problems in combinatorial optimization and hardness of approximation (e.g. Small Set Expansion), quantum information (e.g. Best Separable State), and algorithmic statistics.

Very little is known about what approximation factors we can achieve for this problem in polynomial time, even though such approximations have significant downstream consequences.
Barak, Brand\~{a}o, Harrow, Kelner, Steurer, and Zhou showed that no polynomial-time algorithm can achieve an approximation factor better than $2^{\sqrt{\log n}}$, assuming the Exponential Time Hypothesis \cite{barak2012hypercontractivity}.
On the other hand, a simple spectral algorithm gives a $d^{1/4}$-approximation as a baseline.
For the important special case of $q = 4$, prior work of Guth, Maldague, and Urschel \cite{guth2025estimating} can be combined with known sparsification techniques to give a $d^{1/6}$-approximation. We improve over these results by polynomial factors, giving a $d^{1/8}$-approximation.

Moreover, we construct \emph{sum-of-squares} certificates for the $2 \rightarrow q$ norm.
This directly implies improved algorithms for robust mean and covariance estimation, robust regression, and clustering, when the data only satisfies a bound on its $q$-th moment.
\end{abstract}

\clearpage

\thispagestyle{empty}
\microtypesetup{protrusion=false}
\tableofcontents{}
\microtypesetup{protrusion=true}

\clearpage

\pagestyle{plain}
\setcounter{page}{1}

\section{Introduction}
\label{sec:intro}
\stefantopic{Definitions}
We study approximation algorithms for the $2 \rightarrow q$ norm of matrices.
For $q \geq 1$, the $2 \rightarrow q$ norm of a matrix $X \in \R^{n \times d}$ is defined as
\[
    \normtwoq{X} = \sup_{\norm{v}_2 = 1} \Norm{Xv}_q \,.
\]
Special cases are the spectral norm ($q = 2$) and the largest $\ell_2$-norm of the rows of $X$ ($q = \infty$).
We focus on the in-between setting of $2 < q < \infty$, known as the \emph{hypercontractive} setting.
For such $q$, the $2 \rightarrow q$ norm captures how `spiky' or `analytically sparse' vectors generated by the columns of $X$ are.

While the $2 \rightarrow 2$ and $2 \rightarrow \infty$ norms can be computed efficiently, even approximating the $2 \rightarrow q$ norm for general $2 < q < \infty$ appears to be significantly more challenging \cite{barak2012hypercontractivity,harrow2019limitations,bhattiprolu2023inapproximability}.
Inspired by \cite{barak2012hypercontractivity}, we say an algorithm achieves a $\gamma$-approximation if it can solve the following promise problem.\footnote{In \cite{barak2012hypercontractivity} the right-hand side scales with the minimum non-zero singular value of $X$ to make the definition scale-invariant. For the applications we mention, the version we consider is sufficient. \cite[Lemma 9.7]{barak2012hypercontractivity} shows a reduction from our setting to their setting.}
Given $X$, decide whether it satisfies
(\textbf{YES case}) $\normtwoq{X} \geq \gamma$ or (\textbf{NO case}) $\normtwoq{X} \leq 1$.

Such approximation algorithms have direct consequences for a number of important problems such as Small Set Expansion and Best Separable State and key questions in algorithmic statistics \cite{barak2012hypercontractivity,harrow2013testing,hopkins2019hard}.
Despite its centrality, very little is known about what approximation factors we can achieve in polynomial time.
There is a `naive' approximation algorithm based on a simple spectral certificate involving the rows of $X$ achieving an approximation factor of $d^{1/4}$.
Specializing for a moment on the important special case of $q = 4$, prior work of Guth, Maldague, and Urschel \cite{guth2025estimating} can be combined with known sparsification techniques (described after \cref{thm:main_intro}) to give a $d^{1/6}$-approximation.
We improve over these results by polynomial factors, giving a $d^{1/8}$-approximation.
(Prior works and ours extend to larger even $q$, with approximation factor depending on $q$.)

Our algorithm is based on computing eigenvalues of carefully-constructed family of matrices.
Further, it is also captured by the low-degree sum-of-squares proof system, which is important for the applications to algorithmic statistics.

\paragraph{Algorithmic Statistics, Small Set Expansion, and QMA[2]}
\stefantopic{General motivation for the problem}
Improvements to the approximation factor have direct consequences to algorithmic statistics, and we detail the consequences of our $d^{1/8}$-approximation below.
Strong enough further improvements would also have significant implications 
for Small Set Expansion and quantum complexity theory.
In particular, a constant-factor approximation with the appropriate run-time refute the Small Set Expansion Hypothesis of Raghavendra and Steurer \cite{raghavendra2010graph} (under standard complexity theoretic assumptions), a central conjecture in hardness of approximation \cite{raghavendra2008optimal,raghavendra2010graph,raghavendra2010approximations,arora2010subexponential,raghavendra2012reductions,barak2014sum,bafna2021playing,ghoshal2023lifting}.
Weaker approximation factors still yield interesting approximation guarantees.
A constant-factor approximation would additionally resolve the relation of the quantum complexity class $\QMAt$ to classical complexity classes~\cite{harrow2013testing}, answering an important problem in quantum complexity about the power of two unentangled provers \cite{harrow2013testing,barak2012hypercontractivity}.\footnote{In particular, a quasi-polynomial time algorithm would imply that $\QMAt \sse \EXP$.}
We describe the relation to algorithmic statistics next and defer a discussion about Small Set Expansion to after \cref{thm:main_intro}.

\vspace{0.5em}

\textbf{Algorithmic statistics.}
If $x_1,\ldots,x_n$ are the rows of $X$, the $2 \rightarrow q$ norm of $X$ is equivalently the maximum $q$-th moment of any one-dimensional projection of $x_1,\ldots,x_n$, up to normalization.
Constant-factor approximation algorithms for this quantity are at the core of many recent algorithms for challenging problems in algorithmic statistics, such as robust estimation of mean and covariance, robust linear regression, and clustering \cite{hopkins2018mixture,kothari2018robust,klivans2018efficient,bakshi2021robust,steurer2021sos,gollakota2023tester}.
Currently, such approximations are only known when $x_1,\ldots,x_n$ are independent draws from a `nice' probability distribution, e.g. Gaussian, sub-Gaussian, or satisfying a Poincar\'e Inequality \cite{kothari2018robust,hopkins2018mixture,diakonikolas2025sos}.
Whether this type of average-case assumption on $x_1,\ldots,x_n$ can be weakened or removed is thus also central open question in algorithmic statistics \cite{hopkins2019hard,diakonikolas2025sos}.
A positive answer would yield efficient algorithms for core problems in robust statistics and clustering under almost information-theoretically minimal distributional assumptions.
Similarly, weaker yet non-trivial approximation factors directly lead to improved estimation guarantees under significantly weaker distributional assumptions than current efficient algorithms require.

\stefantopic{Previous works and main question of our work}
\paragraph{Best approximation factor in polynomial time?}
Despite significant investigations of various special cases and variations \cite{bhaskara2011approximating,barak2012hypercontractivity,bhattiprolu2017weak,harrow2019limitations,bhattiprolu2023inapproximability,guth2025estimating,bhattiprolu2025inapproximability}, our understanding of the computational complexity of approximating the $2 \rightarrow q$ norm is very limited.
The typical regime of interest is when $n \gtrsim d^{q/2}$.
The only known constant-factor approximation algorithm is based on brute-force search in the column span of $X$, requiring time $\exp(\Omega(d))$.
Further, \cite{bhattiprolu2017weak} give sub-exponential (in $d$) algorithms achieving an $O_q(d^{(1-\alpha)(1/2 - 1/q)})$-approximation in time $\exp(\tilde{O}(d^{\alpha}))$ for $\alpha \in [0,1]$.
On the other hand, \cite{barak2012hypercontractivity} show that approximating the $2 \rightarrow 4$ norm to within a factor of $1 + n^{-O(1)}$ is $\mathrm{NP}$-hard and that a $2^{o(\sqrt{\log n})}$-approximation requires time at least $2^{\omega(\log n)}$ under the Exponential Time Hypothesis.\footnote{More specifically, they show that for any $\e, \delta > 0$ such that $2\e + \delta < 1$, obtaining an $\exp(\log^\e(n))$-approximation factor takes time at least $\exp(\log^{(1+\delta)}(n))$ under ETH.}
Further, \cite{harrow2019limitations} (unconditionally) showed that approximating the $2\rightarrow4$ norm up to any constant needs at least $\Omega(\log d/ \poly(\log \log d))$ rounds of the sum-of-squares hierarchy.

This state of affairs in particular leaves open the following tantalizing possibility:
For any $\eps > 0$, can we achieve an $d^{\e}$-approximation in polynomial time?\footnote{Similarly, a constant-factor approximation in quasi-polynomial time is not ruled out.}
An affirmative answer would be a major breakthrough, implying similarly strong approximations for Small Set Expansion.
In this work we thus work toward the following question:
\begin{center}
    \emph{What is the best approximation factor we can achieve in polynomial time (in $n$ and $d$)?}
\end{center}

\stefantopic{Baseline + ETH}
Focusing on $q = 4$ for a moment, the ETH hardness result of \cite{barak2012hypercontractivity} shows that for the $2 \rightarrow 4$ norm this needs to be at least $2^{\sqrt{\log n}}$.
Further, we show in \cref{sec:tech_overview} that for even $q$ a baseline for this question is as follows:
The spectral norm of $\tilde{X} = \sum_i x_i^{\otimes (q/2)}[x_i^{\otimes (q/2)}]^\top$ always satisfies $\normtwoq{X} \leq \normop{\tilde{X}}^{1/q} \leq d^{1/4} \cdot \normtwoq{X}$, thus giving a $d^{1/4}$-approximation algorithm.
The recent work of Guth, Maldague, and Urschel combined with known sparsification techniques, described below, achieves a $d^{1/6}$-approximation \cite{guth2025estimating} for $q = 4$.

\stefantopic{Statement of main result}
\paragraph{Our contribution}
We give a polynomial-time algorithm that improves over both of the above algorithms by polynomial factors, both our and previous works extend to larger even $q$ as well.
Our algorithm also applies to the search version of the problem.
We also observe that a simple tensoring/parallel repetition argument combined with the results of \cite{harrow2019limitations} shows that for any $p$, the natural degree-$p$ sum-of-squares certification algorithm can at best achieve an $d^{\Omega(1/p)}$-approximation.
\begin{theorem}[See \cref{thm:two_to_q_full} for the full version]
    \label{thm:main_intro}
    Let $q \in \N$ be even.
    There is an $O(n^2d^2 + nd^3)$-time\footnote{Optimizing the exact polynomial runtime is not the focus of this work, however, we emphasize that the runtime is much faster than computing the sum-of-squares certificate given by \cref{thm:sos_moment_intro}.} algorithm that approximates the $2 \rightarrow q$ norm of a matrix $X$ to within a factor of $d^{1/4 - 1/(2q)}$.
    Moreover, this algorithm computes a unit vector $\hat{v}$ such that $\normq{X \hat{v}} \geq \normtwoq{X} / d^{1/4 - 1/(2q)}$. 
\end{theorem}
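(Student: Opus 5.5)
The plan is to certify and round with two complementary spectral objects, and to balance them by a threshold parameter. I describe the case $q=4$; for general even $q$ the same argument runs with $x_i^{\otimes q/2}$ in place of $x_i^{\otimes 2}$. The target exponent $d^{1/4-1/(2q)}$ is then what the analogous balancing should yield, with the $1/q$ coming from taking $q$-th roots, but I have not verified this for $q>4$.

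\textbf{Certificate.} Let $\tilde X=\sum_i x_i^{\otimes 2}(x_i^{\otimes 2})^\top$ and $\lambda=\normop{\tilde X}$, so that $\normtwoq{X}^4\le\lambda$. Let $W$ be a top eigenvector, viewed as a symmetric $d\times d$ matrix with $\normf{W}=1$ and spectral decomposition $W=\sum_j\mu_ju_ju_j^\top$, $\sum_j\mu_j^2=1$. Then
\[
\lambda=\sum_i\Big(\sum_j\mu_j\iprod{x_i,u_j}^2\Big)^2 .
\]

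\textbf{Rounding and the two bounds.} The algorithm evaluates $\normq{Xv}$ on the candidates $v\in\{u_j\}_j$. It also forms $\Sigma'=\sum_i\norm{x_i}^2x_ix_i^\top$ and evaluates $\normq{Xv}$ on its top eigenvectors. The output $\hat v$ is the best candidate, and we write $A=\normq{X\hat v}^4$. Computing the eigendecompositions and evaluating the candidates accounts for the $O(n^2d^2+nd^3)$ running time. We must show $\lambda\le d^{1/2}A$, which gives both the decision and the search guarantee with factor $d^{1/8}$. Fix a threshold $k$ and split the indices $j$ into the set $H$ of the $k$ largest $|\mu_j|$ and the tail $T$, on which $|\mu_j|\le 1/\sqrt k$. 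Apply $(a+b)^2\le 2a^2+2b^2$.
\begin{itemize}
\item The \emph{head} is bounded by Cauchy--Schwarz: $\sum_i(\sum_{j\in H}\mu_j\iprod{x_i,u_j}^2)^2\le\sum_{j\in H}\normq{Xu_j}^4\le kA$.
\item The \emph{tail} is at most $\frac1k\sum_i\norm{x_i}^4$, which must be compared with $A$. Here one uses the candidate from $\Sigma'$: with $S=\sum_i\norm{x_i}^4=\mathrm{tr}\,\Sigma'$ and $w$ the top eigenvector,
\[
S\le d\,w^\top\Sigma'w\le d\,S^{1/2}\norm{Xw}_4^2 .
\]
\end{itemize}

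\textbf{Main obstacle.} The naive tail bound from the last display only gives $S\le d^2A$, and balancing against the head at $k=d$ recovers the trivial $d^{1/4}$. So the crux is a sharper tail estimate, namely $\sum_i(\sum_{j\in T}\mu_j\iprod{x_i,u_j}^2)^2\lesssim (d^{3/2}/k)\,A$, with the choice $k=d^{1/2}$. My first attempt would avoid bounding $\norm{x_i}^2$ trivially. Instead I would project onto the tail subspace $P_T$ and iterate the $\Sigma'$-argument with $\Sigma'_T=\sum_i\norm{P_Tx_i}^2P_Tx_ix_i^\top P_T$. Flat weights $|\mu_j|\le1/\sqrt k$ mean the tail quadratic form behaves like $\frac1{\sqrt k}\norm{P_Tx_i}^2$ with random signs. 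A second-moment (Gaussian-sign) argument should then save a further factor $\sqrt d$ over the worst case, since it replaces $\mathrm{tr}$ by a Frobenius-norm quantity, $\sqrt d$ times the operator norm. If that fails, the fallback is to first sparsify $X$, reducing to $n\le\poly(d)$ rows with comparable norms, as in the approach combining \cite{guth2025estimating} with sparsification, and to use the row candidates $x_i/\norm{x_i}$ to control the heavy rows.
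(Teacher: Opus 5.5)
\textbf{There is a genuine gap, and it is not technical.} The inequality $\lambda\le d^{1/2}A$ you are trying to prove is false in general, so no sharper tail estimate can rescue the plan.

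Since every candidate is a unit vector, $A\le\normtwofour{X}^4$. Your target would therefore say that the baseline flattening $\normop{\tilde X}^{1/4}$ is within $d^{1/8}$ of $\normtwofour{X}$. But this flattening can be off by $d^{1/4}$; this is the tightness part of \cref{thm:baseline}, and the same obstruction applies to your $x_i^{\otimes q/2}$ version for larger $q$. Concretely, take i.i.d.\ standard Gaussian rows with $n\gtrsim d^2$. Then $\normtwofour{X}^4=O(n)$, while the test vector $\mathrm{Vec}(I_d)/\sqrt d$ gives $\lambda\ge\frac1d\sum_i\normt{x_i}^4=\Omega(nd)$. Hence $\lambda=\Omega(d\,A)$ whatever candidates you use.

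In your own decomposition this instance has $W\approx I_d/\sqrt d$ with all $\mu_j$ equal and positive. The tail term is then genuinely of order $nd$, i.e.\ of order $dA$, and there are no sign cancellations for a ``random-sign'' argument to exploit. The bookkeeping also fails on its own terms. Granting your conjectured tail bound, $k=d^{1/2}$ gives $kA+(d^{3/2}/k)A\approx dA$, which is the trivial bound, and the best choice $k=d^{3/4}$ gives only $d^{3/4}A$. Your fallback (sparsify, then use normalized rows) is the \cite{guth2025estimating} route, which yields $d^{1/6}$ for $q=4$, not $d^{1/8}$.

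\textbf{The missing idea is to replace the $d^2\times d^2$ flattening by partial contractions.} With $M_v=\E_i\iprod{x_i,v}^2x_i^{\otimes(q-2)}$ one has $\normqexp{Xv}^q=\iprod{v^{\otimes(q-2)},M_v}\le\normf{M_v}$, and
\[
\normf{M_v}^2=\E_i\iprod{x_i,v}^2\,v^\top M_iv\le v^\top\tilde Mv\le\normop{\tilde M},\qquad M_i=\E_j\iprod{x_i,x_j}^{q-2}x_jx_j^\top,\quad \tilde M=\E_i\normop{M_i}x_ix_i^\top ,
\]
so $\normtwoqexp{X}^{2q}\le\normop{\tilde M}$. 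Take as candidates $e_1,\dots,e_d$, the normalized rows, the top eigenvectors of the $M_i$, and the top eigenvector of $\tilde M$, and let $B$ be the best candidate value. Then:
\begin{itemize}
\item H\"older gives $\normop{M_i}\le B^q\normt{x_i}^{q-2}$.
\item AM--GM over the basis vectors gives $\E_i\normt{x_i}^q\le B^qd^{q/2}$.
\item One more H\"older gives $\normop{\tilde M}\le B^{2q}d^{(q-2)/2}$, i.e.\ $\normtwoqexp{X}\le d^{1/4-1/(2q)}B$.
\end{itemize}
Your $\Sigma'$ does appear in this argument, since $\tilde M\preceq B^q\,\E_i\normt{x_i}^{q-2}x_ix_i^\top$. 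But there it controls $\normop{\tilde M}$, not $\lambda$.

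This beats the flattening because $\normop{\tilde M}$ is compared with the $2q$-th power of the norm rather than the $q$-th. On the Gaussian instance, $\lambda/n$ exceeds $\normtwofourexp{X}^4$ by a factor of order $d$, and $\normop{\tilde M}$ exceeds $\normtwofourexp{X}^8$ by at most a factor of order $d$. After taking roots, this costs $d^{1/4}$ in the first case and only $d^{1/8}$ in the second.
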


\stefantopic{brief description of list of proxies and SoS certificate}
For the important special case of $q=4$ we achieve a $d^{1/8}$-approximation.
We also show that this factor is best-possible for our approach.
It is an interesting open question to see if variations of it can achieve better approximation factors; we refer to \cref{sec:open_questions}.
Further, explained in more detail in \cref{sec:applications_intro}, we show that in the \textbf{NO} case (when the $2\rightarrow q$ norm is at most 1), there is a separate certificate (certifying an upper bound of $d^{1/4 - 1/(2q)}$) in the form of a low-degree \emph{sum-of-squares proof}.
This is important for the applications to algorithmic statistics, and a certificate of a different form or a generic search or approximation algorithm is not known to lead to improvements for these downstream applications.

\paragraph{Previous polynomial-time algorithms}
Recall that the `trivial' algorithm achieves a $d^{1/4}$-approximation, this holds for all even $q \geq 4$.
For $q = 4$, a simple interpolation argument, described by Steinberg \cite{steinberg2005computation},  using the spectral norm, $\normtwotwo{X}$, and the maximum row norm, $\normtwoinfty{X}$, of $X$ gives an $n^{1/8}$-approximation, together with Lewis-weight sparsification, described next, this also achieves an $d^{1/4}$-approximation.

Recent work of Guth, Maldague, and Urschel achieves an $n^{1/12}$-approximation for $q = 4$ \cite{guth2025estimating}.
Typically, the applications have $n \gtrsim d^{q/2}$, i.e., $n \gtrsim d^2$ for $q = 4$, with larger $n$ possible.
However, for large $n$, $2\rightarrow q$ norm instances can be sparsified using Lewis weights \cite{cohen2015lp} to reduce to the case when $n = \tilde{O}(d^{q/2})$.\footnote{Specifically, we can in polynomial time compute a matrix $X' \in \R^{N\times d}$ for $N = O(d^{q/2} \log d) $ such that with high probability for all vectors $v$, $\normq{Xv}$ and $\normq{X'v}$ are the same up to a constant factor. For even $q$ one can instead use the PSD sparsifiers of Batson, Spielman, and Srivastava~\cite{doi:10.1137/130949117} on a tensored instance to get a deterministic reduction of the number of rows to $N = O(d^{q/2})$.}
Combining this sparsification with \cite{guth2025estimating} yields a $d^{1/6}$-approximation for $q = 4$, and a $d^{(q-2)/4(q-1)}$-approximation for general $q$ (versus our $d^{(q-2)/4q}$-approximation).\footnote{We thank reviewers for FOCS 2026 for pointing out this sparsification trick to us.}

As mentioned above, the literature on algorithmic statistics has (implicitly) given approximation algorithms for average-case versions of this problem, where we additionally assume that in the \textbf{NO} case the rows of $X$ are $n$ \iid samples from some `nice' distribution.
Such `nice' distributions include: (affine transformations of) Gaussians, spherically symmetric, product, \cite{kothari2018robust,hopkins2018mixture}, or sub-Gaussian distributions \cite{diakonikolas2025sos}, or distributions satisfying a Poincar\'e Inequality \cite{kothari2018robust}.

Lastly, \cite{barak2012hypercontractivity} showed that $\normtwofour{X}^4$ can be approximated to within an \emph{additive} factor of $\e \normtwotwo{X}^2 \normtwoinfty{X}^2$ in time $n^{O(\log (n) / \e^2)}$.
\cite{brandao2015estimating} improved this to work for any even $q$ with additive approximation factor $\e \normtwotwo{X}^2 \normtwoinfty{X}^{q-2}$ and runtime $n^{O(1 / \e^2)}$.

\paragraph{Limitations for sum-of-squares}

The work of \cite{harrow2019limitations} shows that the natural sum-of-squares relaxation of the $2 \rightarrow 4$ norm requires at least $\log d / \poly(\log \log d)$ rounds to obtain a constant factor approximation.
Combining this with a tensoring/parallel repetition argument from \cite{bhattiprolu2023inapproximability} leads to the following observation (see \cref{sec:tensoring_sos_lower_bound} for an argument).
\begin{observation}
\torestate{
\label[observation]{obs:SoS_lower_bound}
For any $p \in \N$ sufficiently large, an $d^{1/p}$-approximation to the $2\rightarrow 4$ norm requires at least $p /\poly(\log p)$ rounds of sum-of-squares.
}
\end{observation}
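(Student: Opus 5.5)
Starting from \cite{harrow2019limitations}, I would amplify its constant-factor gap by tensoring, in the style of \cite{bhattiprolu2023inapproximability}. The base result gives a matrix $X_0 \in \R^{n_0 \times d_0}$ and a degree-$D_0$ pseudo-expectation $\pE_0$, with $D_0 = \log d_0/\poly(\log\log d_0)$. This pseudo-expectation is defined over unit vectors $v \in \R^{d_0}$ and satisfies
\[
\pE_0 \normfourexp{X_0 v}^4 \geq C^4 \normtwofour{X_0}^4
\]
for some constant $C>1$. Tensoring works because the $2\to 4$ norm is multiplicative under tensor products:
\[
\normtwofour{X\otimes Y} = \normtwofour{X}\,\normtwofour{Y}.
\]
The direction $\geq$ follows by plugging in $v\otimes w$. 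The direction $\leq$ uses the fact that $2\to q$ norms with $q\geq 2$ tensorize, which comes from a Minkowski-type argument: apply $X$ row-blockwise, then $Y$, and use $\norm{\cdot}_{\ell_4(\ell_2)}\leq \norm{\cdot}_{\ell_2(\ell_4)}$ for $4\geq 2$. So I would take $X = X_0^{\otimes k}$, which has dimension $d = d_0^k$, and its true norm is exactly $\normtwofour{X_0}^k$.

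Next I would build the pseudo-expectation for the tensored instance as the product $\pE = \pE_0^{\otimes k}$ on $v = v_1\otimes\cdots\otimes v_k$. Polynomials in $v$ are polynomials in the independent blocks $v_i$, and a product of pseudo-distributions on disjoint variable sets is again a valid pseudo-distribution of degree $D_0$ in each block. Since $v$ is degree $k$ in the blocks, we only get a valid degree-$D_0$ pseudo-expectation for polynomials of degree $\leq D_0$ in $v$. In fact each block only needs degree $D_0$, so the pseudo-expectation is valid up to degree $D_0$ in $v$ regardless of $k$. The unit-norm constraint holds because $\norm{v}^2=\prod_i\norm{v_i}^2$, and products of constraints that are satisfied blockwise remain satisfied. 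The objective factorizes:
\[
\normfourexp{Xv}^4 = \prod_i \normfourexp{X_0 v_i}^4.
\]
Hence $\pE\normfourexp{Xv}^4 = \prod_i \pE_0 \normfourexp{X_0v_i}^4 \geq C^{4k}\normtwofour{X}^4$, so degree-$D_0$ SoS fails to approximate within $C^k$.

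Finally I would choose parameters. Writing $\log d = k\log d_0$, the gap is
\[
C^k = d^{\log C/\log d_0}.
\]
To make this $d^{1/p}$, set $\log d_0 \approx p\log C = \Theta(p)$, i.e. $d_0 = 2^{\Theta(p)}$. Then $D_0 = \log d_0/\poly(\log\log d_0) = p/\poly(\log p)$. We can take $k$ arbitrarily large, so $d$ can be arbitrarily large for fixed $p$, and this requires $p$ sufficiently large so that $d_0$ is large enough for \cite{harrow2019limitations} to apply.

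I expect the main obstacle to be the product-pseudo-expectation step. It requires checking positivity: a square of a polynomial in all blocks has nonnegative pseudo-expectation. That holds because the moment matrix of the product pseudo-expectation is the tensor product of PSD moment matrices. We also need the objective inequality to survive, which depends on the exact form of the \cite{harrow2019limitations} statement, e.g. whether it is normalized with the spectral norm or uses a specific $\normfourexp{\cdot}$ normalization; I would adapt the normalizations so that they multiply across blocks.
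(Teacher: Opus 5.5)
Your proposal is correct and follows essentially the same route as the paper: tensor the constant-gap instance of \cite{harrow2019limitations} $\ell$ times, and use the $\ell$-fold product pseudo-expectation over $v^{(1)}\otimes\cdots\otimes v^{(\ell)}$. Choosing $\log k = \Theta(p)$ then makes the gap $(c/s)^\ell$ equal to $d^{1/p}$ while the degree stays $\log k/\poly(\log\log k) = p/\poly(\log p)$. The only differences are in the supporting facts: you prove $\normtwofourexp{\tilde X}\leq \normtwofourexp{X}^\ell$ directly via the Minkowski ($\ell_4(\ell_2)$ versus $\ell_2(\ell_4)$) argument and justify positivity of the product pseudo-expectation via tensor products of moment matrices, whereas the paper cites \cite[Theorem 4.11]{bhattiprolu2023inapproximability} and leaves positivity implicit.
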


\paragraph{Relation to Small Set Expansion}

Distinguishing graphs that are small set expanders from those that are not reduces to approximating the $2\rightarrow 4$ norm of the Eigenspace (represented e.g. via an orthonormal basis) of Eigenvalues larger than, say, 0.1 of the normalized adjacency matrix.
In particular, a $\gamma$-approximation for the $2\rightarrow 4$ norm allows one to distinguish graphs in which there is a non-expanding set of measure at most $1/\gamma^4$ from graphs in which all sets of at most small constant measure are well-expanding~\cite{barak2012hypercontractivity,barak2014rounding}.\footnote{An analogous reduction between the search problems also exists.}
As a consequence, \cref{thm:main_intro} yields an efficient algorithm when $\delta = d^{-1/2}$, where $d$ is the dimension of the `large' Eigenspaces and $n$ is the size of the graph.
However, the work \cite{barak2014rounding} already solves this problem even when $\delta = d^{-1/3}$.
Their algorithm exploits additional properties of the reduction from Small Set Expansion to the $2 \rightarrow q$ norm problem.

\subsection{Corollaries for Algorithmic Statistics and General Hypercontractive Norms}
\label{sec:applications_intro}

We next describe corrollaries for algorithmic statistics and approximating general hypercontractive norms.

\paragraph{Algorithmic statistics}

A key aspect of the application of $2 \rightarrow q$ norm algorithms, more specifically \emph{certificates} as described below, to algorithmic statistics is its universality:
existing algorithms for \emph{many} problems rely on such certificates in a black-box way \cite{hopkins2018mixture,kothari2018robust,klivans2018efficient,bakshi2021robust,steurer2021sos,gollakota2023tester}, if the certificates can be phrased as sum-of-squares proofs.
Consequently, improvements on the certification question via sum-of-squares algorithms directly lead to improved algorithms for all of these problems.

We pause to highlight one of them, robust mean estimation.
Let $\cD$ be a distribution over $\R^d$ with mean $\mu$ and $q$-th moments bounded along every direction.
Instead of directly receiving \iid samples from $\cD$ as input, an adversary can first inspect this \iid sample and arbitrarily change an $\e$-fraction of the points in it.
We receive the \emph{corrupted} set as input and our goal is to estimate $\mu$ in $\ell_2$-norm. 
Statistically, this problem is well-understood:
Given sufficiently many samples, we can estimate $\mu$ up to dimension-free error $O_q(\e^{1-1/q})$ and this is best-possible \cite{kothari2018robust}.
However, obtaining \emph{efficient} estimators is significantly more challenging and only sub-optimal error guarantees are known:
Applying a one-dimensional estimator coordinate-wise leads to error $O(\e^{1-1/q} \sqrt{d})$ and using exact certificates for bounded second moments (aka the $2\rightarrow 2$ norm) leads to error $O(\sqrt{\e})$ \cite{diakonikolas2017being,steinhardt2018resilience}.

Over the past decade, significant research has been expended to understand the correct scaling of the error as a function of both $\e$ and $d$ in this and closely related settings \cite{diakonikolas2016robust,lai2016agnostic,diakonikolas2017statistical,hopkins2018mixture,kothari2018robust,hopkins2019hard,diakonikolas2020outlier,diakonikolas2022robust,diakonikolas2023algorithmic,diakonikolas2025sos}.
This has led to remarkable progress under additional assumptions (such as $\cD$ being sub-Gaussian).
However, without such assumptions, it is not known how to achieve error better than $O(\min\set{\sqrt{\e},\e^{1-1/q} \sqrt{d}})$.\footnote{One exception where this is achievable is when the covariance of the inlier distribution is \emph{known} to be the identity \cite{diakonikolas2020outlier}.}
A consequence of our results is the first efficient algorithm to break this barrier for a non-trivial range of $\e$.

\vspace{0.5em}

\textbf{Relation to the $2 \rightarrow q$ norm.}
\stefantopic{Transition 2-to-q-norm to moment certificates}
As mentioned before, if $x_1, \ldots, x_n$ are rows of $X$, the $2 \rightarrow q$ norm of $X$ is, up to normalization, equivalent to the maximum $q$-th moment of any one-dimensional projection of the uniform distribution over $x_1, \ldots, x_n$.
\cref{thm:main_intro} shows that when this maximum is at most 1, we can \emph{certify} that it is at most $d^{1/4 - 1/(2q)}$.
For the applications in algorithmic statistics however, it is important that this certificate comes in the form of a \emph{sum-of-squares} proof.
We show that our certificates can be adjusted to be of this form.\footnote{Algorithmically constructing sum-of-squares certificates requires solving a large semi-definite program which has a large polynomial runtime. In contrast, the approximation algorithm of \cref{thm:main_intro} is based on a `proxy list certificate' that  can be computed much faster. It is an interesting open question whether existing algorithms can be adapted to work with these `fast' certificates too to yield algorithms with faster runtime.}
In the following we consider distributions that are uniform over a set of points $x_1, \ldots, x_n \in \R^d$ (e.g., the uncorrupted points for robust mean estimation), this suffices for all applications.
Our results naturally extend to arbitrary distributions, we refer to \cref{sec:sos_moment_technical}.

We say that such a distribution with covariance $\Sigma$ is $(B_q,q)$-bounded/hypercontractive if for any vector $v$, $\E_i \iprod{x_i,v}^q \leq B_q^q \normt{v}^q$ or $\E_i \iprod{x,v}^q \leq B_q^q \iprod{v,\Sigma v}^{q/2}$, respectively, where $\E_i$ denotes the uniform distribution over $[n]$.
Similarly, we say such a distribution is \emph{certifiably} $(B_q,q)$-bounded/hypercontractive when the expression $B_q^q \normt{v}^q - \E_{x \sim \cD} \iprod{x,v}^q$ or $B_q^q \iprod{v,\Sigma v}^{q/2} - \E_{x \sim \cD} \iprod{x,v}^q$ is a sum of squares (of degree $O(q)$) when viewed as a polynomial in formal variables $v$.

\stefantopic{main theorem and discussion}
In these terms, the approximation factor for the $2 \rightarrow q$ norm corresponds to the blow-up in the $B_q$ parameter when going from bounded/hypercontractive to \emph{certifiably} bounded/hypercontractive.
The performance guarantee of efficient algorithms for algorithmic statistics in turn depends on this blow-up.
Understanding what blow-up is necessary is an important question in the area (reiterated in, e.g., \cite{hopkins2019hard,diakonikolas2025sos}).
We show that the proof of \cref{thm:main_intro} can be adapted to yield the following.
\begin{theorem}[See \cref{thm:sos_moment_technical} for the full version]
    \label{thm:sos_moment_intro}
    Let $q \in \N$ be even.
    Any distribution that is $(B_q, q)$-bounded/hypercontractive is $(d^{1/4 - 1/(2q)} B_q, q)$-certifiably bounded/hypercontractive.
\end{theorem}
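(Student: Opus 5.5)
The plan is to prove the sum-of-squares inequality
\[
\E_i \iprod{x_i,v}^q \leq d^{q/4 - 1/2} B_q^q \normt{v}^q
\]
for the bounded case, so the certificate has degree $O(q)$ in $v$. Note that $(d^{1/4-1/(2q)})^q = d^{q/4-1/2}$, which is the claimed blow-up. The hypercontractive case should then follow by the substitution $v \mapsto \Sigma^{1/2} v$. Concretely, apply the bounded statement to the whitened points $\Sigma^{-1/2}x_i$, restricted to the range of $\Sigma$. An SoS identity in $v$ remains an SoS identity after a linear change of variables, and this turns $\normt{v}^q$ into $\iprod{v,\Sigma v}^{q/2}$. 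So the rest of the plan concerns the bounded case. It should mirror the proxy-list certificate behind \cref{thm:main_intro}, whose ingredients are all quadratic-form (spectral) bounds and hence convert into SoS proofs.

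\textbf{Step 1: split the points by norm.} Fix the threshold $\tau = d^{1/4} B_q$ and split the points into short ones ($\normt{x_i} \leq \tau$) and long ones ($\normt{x_i} > \tau$).
\begin{itemize}
\item \emph{Short points.} The inequality $\iprod{x_i,v}^2 \leq \normt{x_i}^2 \normt{v}^2$ has a degree-2 SoS proof (Cauchy--Schwarz/Lagrange identity). Multiplying by the square $\iprod{x_i,v}^{q-2}$ and iterating gives, in SoS,
\[
\iprod{x_i,v}^q \leq \tau^{q-2}\normt{v}^{q-2}\iprod{x_i,v}^2 .
\]
\item \emph{Second moment.} Averaging, the short part is at most $\tau^{q-2}\normt{v}^{q-2}\, v^\top \Sigma v$. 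Bounded $q$-th moments imply $\Sigma \preceq B_q^2 I$; this is an actual matrix inequality, so $v^\top(B_q^2 I - \Sigma)v$ is a sum of squares.
\end{itemize}
Together these bound the short part by $\tau^{q-2} B_q^2 \normt{v}^q = d^{(q-2)/4} B_q^q \normt{v}^q$, which is exactly the target.

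\textbf{Step 2: the long points.} For these I use the tensor certificate. Writing $y_i = x_i^{\otimes q/2}$, we have
\[
\E_i \mathbf 1[\normt{x_i}>\tau]\,\iprod{x_i,v}^q = (v^{\otimes q/2})^\top M (v^{\otimes q/2}), \qquad M = \E_i \mathbf 1[\normt{x_i}>\tau]\, y_i y_i^\top .
\]
Then $\normop{M}\,\normt{v}^q - (\cdot)$ is SoS, so it suffices to show $\normop{M} \lesssim d^{q/4-1/2} B_q^q$. The approach is through the $n\times n$ Gram matrix $G_{ij} = \tfrac1n \mathbf 1_{\text{long}}\iprod{x_i,x_j}^{q/2}$, which has the same nonzero spectrum as $M$.
\begin{itemize}
\item \emph{Diagonal of $G$.} Testing the hypothesis at $v = x_i/\normt{x_i}$ gives $\normt{x_i}^q/n \leq B_q^q$.
\item \emph{Few long points.} The long points are rare: $\E_i \normt{x_i}^q \leq d^{q/2} B_q^q$ (sum the hypothesis over coordinate directions and apply power means). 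Hence the long points make up at most a $d^{q/4}\cdot d^{-q/4}$-type fraction, tuned by $\tau$.
\item \emph{Off-diagonal of $G$.} I will bound this part through the hypothesis applied to directions $x_j/\normt{x_j}$. This yields $\sum_i \iprod{x_i,x_j}^q \leq n B_q^q \normt{x_j}^q$, and Schur's test / Cauchy--Schwarz on rows of $G$ then controls it.
\end{itemize}

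\textbf{Main obstacle.} I expect the hardest step to be the off-diagonal bound in Step 2 at exactly the scale $d^{q/4-1/2}$. The row bounds only involve $q$-th powers while $G$ has $q/2$-th powers, so a naive Schur test loses a factor. If the simple split loses too much, the fallback is to take the paper's full proxy-list certificate, namely the family of spectral matrices used for \cref{thm:two_to_q_full}. Each of its inequalities is a PSD-matrix bound on a tensor power of $v$ multiplied by squares, so each converts to SoS. Finally, the certificates from the individual pieces are summed, since sums of SoS proofs are SoS.
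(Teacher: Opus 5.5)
There is a genuine gap. The bound claimed in Step 2 is false, and no admissible threshold repairs it.

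Take $\cD = N(0,I_d)$ and $q=4$, so $B_4^4=3$ and $\tau=(3d)^{1/4}$. Since $\normt{x}\approx\sqrt{d}$ with overwhelming probability, essentially every point is long. Your ``few long points'' step is vacuous: Markov only gives $\Pr[\normt{x}>\tau]\le \E\normt{x}^q/\tau^q\le d^{q/2}B_q^q/(d^{q/4}B_q^q)=d^{q/4}$. Hence $M$ is essentially the full flattening $\E\, x^{\otimes 2}(x^{\otimes 2})^\top$. The unit test vector $w=\mathrm{vec}(I_d)/\sqrt{d}$ gives $w^\top M w\approx \E\normt{x}^4/d\approx d$, whereas you need $\normop{M}\lesssim d^{1/2}B_4^4$.

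This is exactly the paper's lower bound for the baseline certificate (\cref{thm:baseline}, for any $q$ divisible by $4$). A bound of the form $(v^{\otimes q/2})^\top M v^{\otimes q/2}\le \normop{M}\normt{v}^q$ with a fixed flattening cannot beat $d^{1/4}$ when typical points have norm $\sqrt{d}$. Your Schur test reproduces precisely that loss, since $\E_j\abs{\iprod{x_i,x_j}}^{q/2}\le B_q^{q/2}\normt{x_i}^{q/2}\approx B_q^{q/2}d^{q/4}$. You also cannot raise $\tau$: Step 1 stays within budget only for $\tau\le d^{1/4}B_q$, which lies below the typical row norm.

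Your fallback does not close the gap, because it misplaces the difficulty. In the argument behind \cref{thm:two_to_q_full}, the H\"older bounds on $\normop{M_i}$ and $\normop{\tilde M}$ are numerical facts about fixed matrices and need no SoS translation. In the distributional setting they only use the hypothesis at fixed directions: $e_r$, $\bar x$, and the top eigenvectors of $M_x$ and $\tilde M$.

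The step that does need work is the first one, $\E\iprod{x,v}^q=\iprod{v^{\otimes(q-2)},M_v}\le \normt{v}^{q-2}\normf{M_v}$ for the partial contraction $M_v=\E\iprod{x,v}^2x^{\otimes(q-2)}$. This is \emph{not} a fixed-PSD-matrix bound on a tensor power of $v$: $\normf{M_v}$ is the square root of a degree-$4$ polynomial in $v$. The paper handles it in two moves.
\begin{enumerate}
\item It proves $\normf{M_v}^2=\E_x\iprod{x,v}^2\,v^\top M_x v\le \normop{\tilde M}\normt{v}^4$ in SoS. This applies \cref{fact:spectral_sos} to $M_x$ (multiplied by the square $\iprod{x,v}^2$), and then to $\tilde M$ (multiplied by $\normt{v}^2$).
\item \cref{fact:matrix_cs_sos} removes the square root. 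Split $v^{\otimes(q-2)}=v^{\otimes q/2}\otimes v^{\otimes(q-4)/2}$ and use AM--GM and Cauchy--Schwarz to get $\iprod{v^{\otimes(q-2)},M}\le \tfrac{1}{2\gamma}\normt{v}^q+\tfrac{\gamma}{2}\normt{v}^{q-4}\normf{M}^2$. Then substitute $\normf{M_v}^2\le B^2\normt{v}^4$ with $B=\normop{\tilde M}^{1/2}\le B_q^q d^{q/4-1/2}$ and set $\gamma=1/B$.
\end{enumerate}
Without this square-root removal, or an equivalent device, your chain does not become an SoS proof. Your Step 1 and the whitening reduction for the hypercontractive case are correct, but Step 1 is not needed.
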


\cref{thm:sos_moment_intro} immediately implies improvements for many problems in (robust) algorithmic statistics.
For robust mean estimation, \cref{thm:sos_moment_intro} implies that we can robustly estimate the mean of a data set which has bounded fourth moments up to error $O(\e^{3/4}d^{1/8})$.
This beats the previous barrier of $O(\min\set{\sqrt{\e}, \e^{3/4} \sqrt{d}})$ in the range $O(\e \leq d^{-1/2})$.
Similar consequences follow for several other problems, summarized in \cref{table:summary}.

We refer to the listed references for the exact definitions of the models, but give a brief overview in \cref{sec:alg_stats_models}.
`Distributional Assumption' refers to the uniform distribution of the (uncorrupted) input data set.\footnote{Instead, we can also assume that the (uncorrupted) input consists of \iid samples from a distribution with slightly stronger concentration properties, such as bounded $q+\e$ moments for any $\e > 0$.}
The `Previous Best Algorithms' are based on certificates of second moments, i.e., correspond to approximating the $2 \rightarrow 2$ (i.e., the spectral norm), which can be computed exactly.
The input sample in all algorithms below is of size $d^{O(q)}$ and the runtime of order $(n \cdot d)^{O(q)}$.
We hide multiplicative factors depending only on $q$.

For robust linear regression and mean and covariance estimation in the `relative' error metric, we are not aware of any algorithm that did
not have a polynomial dependence on the condition number of the covariance, which could scale
exponentially with the dimension.

\begin{table}[t]
\centering
\caption{Improved algorithms for algorithmic statistics.}
\label{table:summary}
\begin{tabular}{@{}lcccc@{}}
\toprule
Estimation Task                                                                        & \begin{tabular}[c]{@{}c@{}}Distributional\\Assumption\end{tabular} & \begin{tabular}[c]{@{}c@{}} New Guarantees \\ from \Cref{thm:sos_moment_intro}\end{tabular} & \begin{tabular}[c]{@{}c@{}}Previous Best\\ Guarantee in \\ Polynomial Time\end{tabular} 
& \begin{tabular}[c]{@{}c@{}}Best Possible \\ Error\end{tabular} \\ \midrule
\addlinespace[0.7em]
\begin{tabular}[c]{@{}l@{}}Robust mean estimation: \\ Euclidean norm\end{tabular}               & $(1,q)$-bounded                    & \begin{tabular}[c]{@{}c@{}}
$d^{1/4-1/(2q)} \cdot \eps^{1-1/q}$  \\ \scriptsize\cite{hopkins2018mixture,kothari2018robust}\end{tabular}                                  & 
            
           \begin{tabular}[c]{@{}c@{}}$\min\Set{\sqrt{\eps}, \e^{3/4} \sqrt{d}}$ \\ \scriptsize\cite{steinhardt2018resilience,diakonikolas2017being}
           \end{tabular}
                                                                                  &
                         $\e^{1-1/q}$                                   \\ 
\addlinespace[0.7em]
\begin{tabular}[c]{@{}l@{}}List-decodable \\ Mean estimation\end{tabular}          & $(1,q)$-bounded                & \begin{tabular}[c]{@{}c@{}}
$ \frac{d^{1/4 - 1/(2q)}}{\alpha^{2/q}}$  \\ \scriptsize\cite{kothari2018robust}\end{tabular}                                            & 
\begin{tabular}[c]{@{}c@{}}$\frac{1}{\sqrt{\alpha}}$ \\ \,\,\,\,\scriptsize\cite{charikar2017learning}  \end{tabular}
                                                                        &                   \\ 
\addlinespace[0.7em]
\begin{tabular}[l]{@{}l@{}}Mixture models:  \\
Mixture of $k$\\
$\Delta$-separated \\
components
\end{tabular} &   \begin{tabular}[c]{@{}l@{}}Each component \\is   $(1,q)$-bounded                                                    \end{tabular}  
& 
\begin{tabular}[c]{@{}c@{}}
$\Delta \gtrsim d^{1/4 - 1/(2q)} \cdot k^{2/q}$  \\ \scriptsize\cite{hopkins2018mixture,kothari2018robust}\end{tabular}                                                            &                                     \begin{tabular}[c]{@{}c@{}}$\Delta \gtrsim \sqrt{k}$ \\ \,\,\,\,\scriptsize\cite{diakonikolas2022clustering} \end{tabular}
                                                 & \\ 
\addlinespace[1em]
\begin{tabular}[c]{@{}l@{}}Robust mean estimation: \\ {Mahalanobis} norm\end{tabular}             & \begin{tabular}[c]{@{}c@{}}$(1,q)$-\\hypercontractive\end{tabular}                                          & \begin{tabular}[c]{@{}c@{}}
$d^{1/4 - 1/(2q)}\cdot \eps^{1-1/q}$  \\ \scriptsize\cite{hopkins2018mixture,kothari2018robust}\end{tabular}                                      & \begin{tabular}[c]{@{}c@{}} No general \\  algorithm\end{tabular}                                                                        & $\e^{1-1/q}$               \\ 
\addlinespace[0.7em]
\begin{tabular}[c]{@{}l@{}}Covariance estimation: \\ Relative spectral \\
norm\end{tabular}          & \begin{tabular}[c]{@{}c@{}}$(1,q)$-\\hypercontractive\end{tabular}      & \begin{tabular}[c]{@{}c@{}}
$ d^{1/2 - 1/q} \cdot \eps^{1-2/q}$  \\ \scriptsize\cite{kothari2018robust}\end{tabular}                                            & \begin{tabular}[c]{@{}c@{}} No general \\  algorithm\end{tabular}                                                                        & $\e^{1-2/q}$                                                  \\
\addlinespace[0.7em]
\begin{tabular}{@{}l@{}}Robust linear regression:\\ Arbitrary noise with \\ unit scale \end{tabular} &           \begin{tabular}[c]{@{}c@{}}$(1,q)$-\\hypercontractive\end{tabular}                                        & \begin{tabular}[c]{@{}c@{}}
$d^{1/4 - 1/(2q)} \cdot \eps^{1 - \frac{2}{q}}$  \\ \scriptsize\cite{bakshi2021robust}\end{tabular}                                                            &
\begin{tabular}[c]{@{}c@{}} No general \\  algorithm\end{tabular}
& $\eps^{1-2/q}$                  \\ \bottomrule
\end{tabular}
\end{table}

\paragraph{Corollaries for general hypercontractive norms}
\stefantopic{Corolloray for $p \rightarrow q$ norms for $2 < p < q$}

Our results naturally extend to other hypercontractive norms.
In particular, for $p \geq 1$ we can define $\normpq{X} = \sup_{\norm{v}_p = 1} \normq{Xv}$.
Using standard interpolation arguments, \cref{thm:main_intro} implies the following.
\begin{corollary}[See \cref{thm:p_to_q_full} for the full version]
    \label{cor:p_to_q_intro}
    Let $q \in \N$ be even and $1 \leq p \leq q$.
    Let $\gamma_p = 1/p - 1/2$ when $p \leq 2$ and $\gamma_p = 1/2 - 1/p$ when $p \geq 2$.
    There is a polynomial-time algorithm that approximates the $p \rightarrow q$ norm to within a factor of $d^{\gamma_p + 1/4 - 1/(2q)}$.
\end{corollary}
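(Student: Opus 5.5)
We derive \cref{cor:p_to_q_intro} from \cref{thm:main_intro} by comparing the $\ell_p$ and $\ell_2$ unit balls in $\R^d$ and running the $2\rightarrow q$ algorithm unchanged. The only ingredients are the standard norm comparisons for $v \in \R^d$: if $p \leq 2$ then $\normt{v} \leq \norm{v}_p \leq d^{1/p-1/2}\normt{v}$, and if $p \geq 2$ then $\norm{v}_p \leq \normt{v} \leq d^{1/2-1/p}\norm{v}_p$. In both cases
\[
d^{-\gamma_p}\,\normt{v} \leq \norm{v}_p \leq d^{\gamma_p}\,\normt{v}
\]
up to which side carries the factor. This gives a sandwich between the two operator norms: for $p \leq 2$, $d^{-\gamma_p}\normtwoq{X} \leq \normpq{X} \leq \normtwoq{X}$, and for $p \geq 2$, $\normtwoq{X} \leq \normpq{X} \leq d^{\gamma_p}\normtwoq{X}$. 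Each inequality follows by comparing suprema over nested (rescaled) balls. For instance, when $p \leq 2$ the $\ell_p$ unit ball is contained in the $\ell_2$ unit ball, and $v/\norm{v}_p$ with $\normt{v}=1$ has $\ell_p$-norm one while $\normq{Xv}/\norm{v}_p \geq d^{-\gamma_p}\normq{Xv}$.

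The algorithm runs the procedure of \cref{thm:main_intro}. This returns an estimate $A$ with $\normtwoq{X}/d^{1/4-1/(2q)} \leq A \leq \normtwoq{X}$; concretely, $A = \normq{X\hat v}$ for the returned unit vector $\hat v$. We then output $A$ rescaled by the appropriate power of $d$:
\begin{itemize}
\item[] for $p \leq 2$, output $A\, d^{-\gamma_p}$;
\item[] for $p \geq 2$, output $A$ itself.
\end{itemize}
In either case the sandwich shows that the output $B$ satisfies $\normpq{X}/d^{\gamma_p+1/4-1/(2q)} \leq B \leq \normpq{X}$. For the search version, we output $\hat v/\norm{\hat v}_p$. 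This is feasible, and the value $\normq{X\hat v}/\norm{\hat v}_p$ lies within the same factor, again by the norm comparisons. The promise-problem formulation follows by thresholding.

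There is essentially no obstacle. The one point needing care is tracking the direction of each inequality in the two regimes so that the losses multiply rather than cancel incorrectly. Note also that the full version (\cref{thm:p_to_q_full}) presumably handles $p>q$ or non-even $q$ via more refined Riesz--Thorin-type interpolation, but for the stated range $1 \leq p \leq q$ the crude ball comparison suffices.
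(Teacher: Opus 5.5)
Your proof is correct and is essentially the paper's argument: run the $2\rightarrow q$ algorithm and transfer its guarantee through the $\ell_p$-versus-$\ell_2$ comparisons (the paper's norm-comparison and $p\rightarrow q$-versus-$2\rightarrow q$ facts). The paper computes $\max_{\hat v \in L}\normq{X\hat v}/\norm{\hat v}_p$ over the whole output list rather than rescaling a single $2\rightarrow q$ estimate by $d^{-\gamma_p}$, which is only a cosmetic difference. Your closing guess about the full version is off: it covers exactly the same range ($q$ even, $1\leq p\leq q$) and uses precisely this crude ball comparison, with no Riesz--Thorin-type interpolation.
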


For $p \geq 2$ the previous-best approximation factor was $n^{(q-2)/(2q[q-1])}d^{\gamma_p}$, which is $d^{\gamma_p + 1/4 - 1/(4[q-1])}$ when $n \approx d^{q/2}$ \cite{guth2025estimating}.
Different from $p = 2$, as long as $2 \not\in [p,q]$ approximating the $p \rightarrow q$ norm to within a factor of $2^{\log^{1-\e} n}$ for any $\e$ constant factor is impossible unless $\mathrm{NP} \sse \mathrm{BPTIME}(2^{\log^{O(1)}n})$ \cite{bhattiprolu2023inapproximability}.

\subsection{Related Works}
    The general $p \rightarrow q$ norm problem, beyond the hypercontractive case, has received a lot of attention.
    We list some relevant works and refer to \cite{bhattiprolu2023inapproximability,guth2025estimating} for further references.

    \paragraph{Other works on $p \rightarrow q$ norms}
    In this work we are chiefly concerned with approximating the $2 \rightarrow q$ norm, for other applications of hypercontractive norms in theoretical computer science we refer to \cite{biswal2011hypercontractivity}.
    
    \stefantopic{Non-hypercontracive Regime}
    Approximating the $p \rightarrow q$ norm in the non-hypercontractive case, when $p \geq q$, is much better understood. When $2 \in [q,p]$ the $p \rightarrow q$ norm can be approximated to within a constant factor of at most $\pi/2$ \cite{nesterov1998semidefinite,steinberg2005computation}.
    On the other hand, when $2 \not\in [q,p]$, \cite{bhaskara2011approximating} showed that obtaining any constant-factor approximation is $\NP$-hard, and approximating it to a factor of $2^{\log^{1-\e}n}$ for any constant $\e > 0$ is impossible unless $\NP \sse \DTIME(2^{\log^{O(1)} n})$.
    The works of \cite{steinberg2005computation,guth2025estimating} also give algorithms with polynomial approximation factors in this regime.
    When the entries of the matrix are non-negative, the $p \rightarrow q$ norm can be computed \emph{exactly} for any $p \geq q \geq 1$ \cite{steinberg2005computation,bhaskara2011approximating}.

    \paragraph{Other related works}
    A key part in our argument is to consider `partial contractions' of the fourth moment tensor of the uniform distribution over the rows of $X$.
    Ideas of this form have appeared in several other contexts in theoretical computer science (for instance in \cite{khot2007linear}) -- surveying them all is beyond our scope here.
    Most closely related to our work is the use of such fourth-moment contractions in average-case or semin-random learning problems, especially dictionary learning -- see, e.g., \cite{arora2015simple,awashti2018sparse}.
    The techniques to analyze the contractions in these settings deviate from the setting of worst case matrices $X$ considered in this work.

\section{Technical Overiew}
\label{sec:tech_overview}
For the purpose of this overview, we will focus on $q=4$, the ideas naturally extend to larger even $q$ and we refer to \cref{sec:2_to_q_technical} for the details.
We will focus on proving \cref{thm:main_intro} with the runtime relaxed to $\poly(n,d)$.
The sum-of-squares certificate (\cref{thm:sos_moment_intro}) requires some additional work to show that all steps are captured by the sum-of-squares proof system.
We defer the details to \cref{sec:sos_moment_technical}.

\subsection{Notation and Set-Up}

We use $\gtrsim, \lesssim$ to denote inequalities that are true up to absolute constants.
To avoid dimension-dependent normalization factors, we will make use of \emph{expectation norms}.
For $q \geq 1$ and a vector $x \in \R^n$, we define $\normqexp{x} = \paren{\E_i \abs{x_i}^q}^{1/q}$, where $\E_i$ denotes the uniform distribution over $i \in [n]$.
Note the bar over $q$ in the notation to distinguish it from the regular $q$-norm $\normq{\cdot}$.

In particular, for the rest of this paper (with the exception of \cref{sec:interpolation_pq} on the interpolation to general hypercontractive norms), we will be considering $q$ even and 
\[
    \normtwoqexp{X} = \sup_{\normt{v} = 1} \normqexp{Xv} = \sup_{\normt{v} = 1} \Paren{\E_i \iprod{x_i,v}^q}^{1/q}= n^{-1/q} \sup_{\normt{v} = 1} \normq{Xv} = n^{-1/q} \normtwoq{X} \,.
\]
I.e., $v$ is still unit in the standard Euclidean norm.
Whenever we say `unit' it is with respect to the Euclidean norm.
For a vector $x$, we define $\bar{x} = x / \normt{x}$.
Other norms that we will use are $\normtwoinfty{X}$, i.e., the maximum $\normt{\cdot}$-norm of the rows of $X$, and $\normtwotwoexp{X}$, the square root of the spectral norm of $\tfrac{1}{n} X^\top X = \E_i x_i x_i^\top$.

With these conventions, we say an algorithm gives a $\beta$-approximation for the $2 \rightarrow \bar{4}$ norm if given $X \in \R^{n \times d}$ with rows $x_1, \ldots, x_n \in \R^d$ it can decide whether it satisfies:\footnote{In \cref{sec:2_to_q_technical} we allow a general upper bound of $\alpha^4$ in the \textbf{NO} case.}
\begin{itemize}
    \item \textbf{YES}: $\normtwofourexp{X} > \beta$, that is, there is a unit vector $v$ such that $\E_i \iprod{x_i,v}^4 > \beta^4$.
    \item \textbf{NO}: $\normtwofourexp{X} \leq 1$, that is, for all unit vectors $v$ it holds that $\E_i \iprod{x_i,v}^4 \leq 1$.
\end{itemize}

We say an algorithm that takes $X$ as input \emph{certifies} an upper bound of $\gamma$ on $\normtwofourexp{X}$, when it outputs a number $\ALG(X)$ such that $\normtwofourexp{X} \leq \ALG(X)$ for all $X$ (soundness), independent whether $X$ came from the \textbf{YES} or \textbf{NO} case, and when $X$ came from the \textbf{NO} case, then $\ALG(X) \leq \gamma$ (completeness).
An algorithm certifying an upper bound of $\beta$ directly gives a $\beta$-approximation by computing $\ALG(X)$ and comparing it to $\beta$.
Our algorithms will also be of this form.
Moreover, it will  be easy to extract a vector witnessing large 4-norm in the \textbf{YES} case, i.e., solve the `search' version of the problem.

\subsection{Approximation Algorithms for the \texorpdfstring{$2 \rightarrow 4$}{2-to-4} Norm}

\paragraph{Baseline certificate}

We start by describing the baseline certificate mentioned in \cref{sec:intro}.
In particular, we claim that efficiently certifying an upper bound of $d^{1/4}$ is easy by setting $\ALG(X) = \normop{\E_i x_i^{\otimes 2}[x_i^{\otimes 2}]^\top}^{1/4}$.
This implies an efficient algorithm with approximation factor $d^{1/4}$.
In \cref{sec:baseline} we show that the natural extension to larger $q$ still gives a $d^{1/4}$ approximation.
We also show that this is tight.
I.e., there are matrices $X$ such that $\normtwoqexp{X} \leq O_q(1)$, but $\ALG(X) = \Omega_q(d^{1/4})$.

Returning to $q = 4$, we first argue soundness.
For any $X$, it holds that
\[
    \normtwofourexp{X}^4 = \sup_{\normt{v} = 1}\E_i \iprod{x_i,v}^4 = \sup_{\normt{v} = 1} \Brac{v^{\otimes 2}}^\top \Paren{\E_i x_i^{\otimes 2} \Brac{x_i^{\otimes 2}}^\top} v^{\otimes 2} \leq \Normop{\E_i x_i^{\otimes 2} \Brac{x_i^{\otimes 2}}^\top} = \Paren{\ALG(X)}^4 \,.
\]
To show completeness, consider any $d^2$-dimensional unit vector and reshape it into a $d \times d$ matrix $A$.
We need to show that $\E_i \Paren{x^\top A x}^2\leq d$ assuming $\normtwofourexp{X} \leq 1$, then taking fourth roots implies the claim.
Without loss of generality we can then assume $A$ is symmetric.
Let $\sum_{r=1}^d \lambda_r v_r v_r^\top$ be its Eigen decomposition.
Then it follows by Cauchy-Schwarz that
\[
\E_i \Paren{x_i^\top A x_i}^2 = \E_i \Paren{\sum_{r=1}^d \lambda_r \iprod{x_i,v_r}^2}^2 \leq \Paren{\sum_{r=1}^d \lambda_r^2} \cdot \sum_{r = 1}^d \E_i \iprod{x_i,v_r}^4 = \Normf{A}^2 \cdot \sum_{r = 1}^d \E_i \iprod{x_i,v_r}^4 \,.
\]
But when $\normtwofourexp{X} \leq 1$, $\E_i \iprod{x_i,v_r}^4 \leq 1$ for all $r = 1, \ldots, d$ which implies that the above is at most $d$.

\paragraph{Certificates via a short list of `proxy' directions}

The recent work of Guth, Maldague, and Urschel \cite{guth2025estimating} improves over the baseline certificate.
Their certificate can be cast as checking if $\normqexp{X\hat{v}}$ is small for a short list of \emph{proxy directions} $\hat{v}$ that can be computed efficiently from the input.
The hope is that if the proxy directions are `representative enough', this certifies that $\normfourexp{Xv}$ is not too large for \emph{any} direction $v$.
Our algorithm will also be based on such a list, albeit a more cleverly chosen one.

Formally, given any matrix $X$, they compute in time $\poly(n,d)$ a list of unit vectors $L$ such that
\[
    \normtwofourexp{X} = \sup_{\normt{v} = 1} \normfourexp{Xv} \leq \underbrace{n^{1/12} \cdot \sup_{\hat{v} \in L} \normfourexp{X\hat{v}}}_{\coloneqq \ALG(X)} \leq n^{1/12} \cdot \normtwofourexp{X}\,.
\]
The certification algorithm then outputs $\ALG(X)$.
The inequalities above imply that this certifies an upper bound of $n^{1/12}$.
Further, this algorithm also solves the search problem:
There must exist $\hat{v} \in L$ such that $\normfourexp{X\hat{v}} \geq \normtwofourexp{X}/ n^{1/12}$.
Their list $L$ contains the normalized rows of $X$ and the top right singular vector of $X$.
Recall that typically $n \gtrsim d^2$ (for $q = 4$) and we can always reduce to the case of $n = \tilde{O}(d^2)$ via Lewis-weight sparsification.
Thus, the above gives a $d^{1/6}$-approximation.\footnote{It can be shown that for this concrete choice of list the sparsification pre-processing step is necessary. Without it, the guarantee degrades to $d^{1/4}$ when $n \gtrsim d^3$, see~\cref{thm:guth_limitation}.}

\paragraph{Our starting point: An infinite family of Frobenius norms}
Note that certifying an upper bound on $\normtwofourexp{X}$ is equivalent to certifying an upper bound on the injective norm of the four-tensor $\E_i x_i^{\otimes 4}$.
The baseline certificate is a spectral certificate based on a direct matrix-flattening of this tensor.
An alternative viewpoint on our certificate is to see it as a family of spectral certificates, using more carefully crafted matrices than the one in the baseline.

We start by introducing an infinite family of $d \times d$ matrices that \emph{exactly} capture $\normtwofourexp{X}$.
For a unit vector $v$, define the $d \times d$ matrix $M_v = \E_i \iprod{x_i,v}^2 x_i x_i^\top$, corresponding to contracting only two modes of the four tensor with $v$.
Note that
\[
    \normtwofourexp{X}^4 = \sup_{\normt{v} = 1} \E_i \iprod{x_i,v}^4 = \sup_{\normt{v} = 1} v^\top \Paren{\E_i \iprod{x_i,v}^2 x_i x_i^\top }v \leq \sup_{\normt{v} = 1} \normop{M_v} \,,
\]
where it can be checked using Cauchy-Schwarz that the last inequality is in fact an equality.
Thus, $\normtwofourexp{X}^4 = \sup_{\normt{v} = 1} \normop{M_v}$.

A priori, it is unclear how this might make the problem more tractable.
In particular, $\normop{M_v}$ does not have a nice closed-form expression as a function of $v$.
Our next step is thus to find upper bounds that do have such a nice expression and that are tight enough to certify the $d^{1/8}$ upper bound on $\normtwofourexp{X}$ we are seeking.
We argue that passing to the Frobenius norm of $M_v$ is sufficient.
Indeed, it holds that
\begin{equation*}
    \label{eq:frob_bound}
    \normtwofourexp{X}^4 = \sup_{\normt{v} = 1} \normop{M_v} \leq \sup_{\normt{v} = 1} \normf{M_v} \leq \sqrt{d} \cdot \sup_{\normt{v} = 1} \normop{M_v} = \sqrt{d} \cdot  \normtwofourexp{X}^4\,.%
\end{equation*}
Thus, when $\normtwofourexp{X} \leq 1$, it follows that $\sup_{\normt{v} = 1} \normf{M_v} \leq \sqrt{d}$. 
In the remainder of this section our main goal is to \emph{certify} this upper bound.
Given this, the above inequality implies that this also certifies an upper bound of $d^{1/8}$ on $\normtwofourexp{X}$.

\paragraph{From spectral norms to proxy directions}

We will be using the idea of proxy directions discussed above.
We take a two-step approach.
First, we will carefully construct a $d \times d$ matrix $\tilde{M}$, that can be efficiently computed from $X$, such that
\begin{equation}
    \label{eq:tilde_M_cert}
\hphantom{\textnormal{(Spectral Bound)}}\normtwofourexp{X} \leq \sup_{\normt{v} = 1} \normf{M_v}^{1/4} \leq \normop{\tilde{M}}^{1/8} \,.
\tag{Spectral Bound}
\end{equation}
Then, we will construct a polynomially-sized list of vectors $L$ such that
\begin{equation}
    \label{eq:cert_list}
    \hphantom{\textnormal{(List Bound)}}\normop{\tilde{M}}^{1/8} \leq d^{1/8} \cdot \max_{\hat{v} \in L} \normfourexp{X \hat{v}} \leq d^{1/8} \cdot \normtwofourexp{X}\,.
    \tag{List Bound}
\end{equation}
Our certificate is $\ALG(X) = d^{1/8} \cdot \max_{\hat{v} \in L} \normfourexp{X \hat{v}}$.
Soundness and completeness follow from  \cref{eq:cert_list}.
Further, we can efficiently compute a unit vector $\hat{v}$ in the list $L$ such that $\normfourexp{X \hat{v}} \geq \normtwofourexp{X} / d^{1/8}$.
Concretely, the list $L$ we construct contains the following unit vectors in $\R^d$: %
\begin{enumerate}
    \item The standard basis vectors $e_1, \ldots, e_d$.
    \item The normalized rows of $X$: $\bar{x}_1 = \frac{x_1}{\normt{x_1}}, \ldots, \bar{x}_n= \frac{x_n}{\normt{x_n}}$.
    \item The top eigenvector of $\tilde{M}$.
    \item The top eigenvectors of the matrices $M_1, \ldots, M_n$ where $M_i = \E_j \iprod{x_i,x_j}^2 x_j x_j^\top$. 
\end{enumerate}
The meaning of the $M_i$ matrices will become clear in the proof.

\paragraph{Proving the Spectral Bound}

The difficult part of showing \cref{eq:tilde_M_cert} is to construct the right matrices $\tilde{M}$ and $M_i$.
We set $\tilde{M} = \E_i \normop{M_i} x_i x_i^\top$ and $M_i$ as above.
For some intuition on how to arrive at this choice, we can expand $\normf{M_v}^2$ and try to reduce it to a sequence of spectral certificates.
Given this setup, we can verify \cref{eq:tilde_M_cert}.
We have already shown the first inequality.
For the second one, we raise to the eighth power and note that for any unit vector $v$ it holds that 
\begin{align*}
    \normf{M_v}^2 &= \E_{ij} \iprod{x_i,v}^2 \iprod{x_j,v}^2 \iprod{x_i,x_j}^2 = \E_i \iprod{x_i,v}^2 v^\top \Paren{\E_j \iprod{x_i,x_j}^2 x_j x_j^\top}v \leq \E_i \iprod{x_i,v}^2 \normop{M_i} \\
    &= v^\top \Paren{\E_i \normop{M_i} x_i x_i^\top} v \leq \normop{\tilde{M}} \,.
\end{align*}
Thus, $\sup_{\normt{v} = 1} \normf{M_v}^{1/4} \leq \normop{\tilde{M}}^{1/8}$.

Recall that $\normtwofourexp{X} \leq 1$ implies that $\normf{M_v}^2 \leq d$.
To gain intuition why the above should yield a tight upper bound, we can consider the idealized setting in which the rows are isotropic, i.e., $\E_i x_i x_i^\top = I_d$ and all pair-wise inner products are of order $\sqrt{d}$ (we show that this is still true on average when $\normtwofourexp{X} \leq 1$).
Then, ignoring dependencies between the rows we obtain $M_i \approx d \cdot \E_i x_i x_i^\top =d \cdot I_d$ and $\tilde{M} \approx d \cdot \E_i x_i x_i^\top = d \cdot I_d$, which gives $\normop{\tilde{M}} \lesssim d$.

\paragraph{Proving the List Bound}
Let $B = \max_{\hat{v} \in L} \normfourexp{X \hat{v}}$, we will show that $\normop{\tilde{M}} \leq d \cdot B^8$ which implies \cref{eq:cert_list}.
At the end of this section, we will argue that for all $i$ it holds that $\normop{M_i} \leq B^4 \normt{x_i}^2$.
Admitting this for the moment, this implies that
\[
    \normop{\tilde{M}} = \normop{\E_i \normop{M_i} x_i x_i^\top} \leq B^4 \cdot \normop{\E_i \normt{x_i}^2 x_i x_i^\top} \,.
\]

We will next argue that bounded $2 \rightarrow \bar{4}$ norm implies that row norms are small on average.
Indeed, we will show that $\E_i \normt{x_i}^4 \leq B^4 d^2$.
Denote by $x_i(r)$ the $r$-th coordinate of $x_i$, then by Cauchy-Schwarz
\[
     \E_i \normt{x_i}^4 = \sum_{r,r' = 1}^d \E_i x_i(r)^2 x_i(r')^2 \leq \sum_{r,r' = 1}^d \sqrt{\E_i x_i(r)^4} \sqrt{\E_i x_i(r')^4} = \Paren{\sum_{r=1}^d \sqrt{\E_i x_i(r)^4} }^2\,.
\]
But since $e_r \in L$ for all $r$, we know that $\sqrt{\E_i x_i(r)^4} =  \normfourexp{X e_r}^2 \leq B^2$, which gives the claim.\footnote{A different approach is as follows: After sparsifying down to $n = \tilde{O}(d^2)$, it holds for \emph{all} $i$ that $\normt{x_i}^4 = \iprod{x_i,\bar{x}_i}^4 = n \cdot \normfourexp{X \bar{x}_i}^4 \leq \tilde{O}(d^2) \cdot B^4$. As a consequence, the coordinate vectors $e_i$ could be eliminated from the list. We choose to present the argument above since it automatically handles instances when $n \gg d^2$, without sparsifying first.}

With this in hand, we turn back to analyzing $\tilde{M}$.
Let $\tilde{u} \in L$ be the top eigenvector of $\tilde{M}$, then 
\[
    \normop{\tilde{M}} \leq B^4 \cdot \E_i \normt{x_i}^2 \iprod{x_i, \tilde{u}}^2 \leq B^4 \cdot \sqrt{\E_i \normt{x_i}^4} \cdot \sqrt{\E_i \iprod{x_i, \tilde{u}}^4} \leq B^4 \cdot  \sqrt{B^4 d^2} \cdot \normfourexp{X \tilde{u}}^2 \leq B^8 d\,.
\]

It remains to show that $\normop{M_i} \leq B^4\normt{x_i}^2$.
Let $\tilde{u}_i \in L$ be the top eigenvector of $M_i$.
Then
\[
    \normop{M_i} = \E_j \iprod{x_i,x_j}^2\iprod{x_j,\tilde{u}_i}^2 \leq \sqrt{\E_j \iprod{x_i,x_j}^4} \sqrt{\E_j \iprod{x_j,\tilde{u}_i}^4} \,.
\]
The second factor is at most $B^2$.
The first term is equal to $\normfourexp{X x_i}^2 = \norm{x_i}^2 \cdot \normfourexp{X \bar{x}_i}^2 \leq \norm{x_i}^2 \cdot B^2$.

\subsection{Open Questions}
\label{sec:open_questions}

We end with a few open questions and future directions.

\paragraph{Better approximation factors for the $2 \rightarrow q$ norm}
The first concerns even better approximation factors for the $2 \rightarrow q$ norm problem in (quasi-)polynomial time.
In constructing our certificates we used the bound $\normtwofourexp{X}^4 = \sup_{\normt{v} = 1} \normop{M_v} \leq \sup_{\normt{v} = 1} \normf{M_v}$.
The right-hand side can be as large as $\sqrt{d} \normtwofourexp{X}^4$, showing that the $d^{1/8}$ approximation we gave is best-possible using this approach.
A natural avenue to obtaining better approximation factors would be to rely on higher-order Schatten-$p$ norms and instead use the bound $\sup_{\normt{v} = 1} \normop{M_v} \leq \sup_{\normt{v} = 1} \norm{M_v}_p$, for $p > 2$.
Indeed, it immediately follows that $\sup_{\normt{v} = 1} \norm{M_v}_p \leq d^{1/p} \cdot \normtwofourexp{X}^4$.
Thus, if we can also \emph{certify} this upper bound of $d^{1/p}$, this would lead to an $d^{1/(4p)}$ approximation for the $2 \rightarrow 4$ norm.
Taking $p \approx \log d$ would then correspond to a constant factor approximation. 
The analysis of $p = 2$ in this work does not seem to extend to the more general case.

\paragraph{Extension to Best Separable State}
The second concerns the best separable state (BSS) problem.
While existing reductions \cite{harrow2013testing,barak2012hypercontractivity} are too lossy to translate the guarantees of \cref{thm:main_intro} to BSS, it would be intriguing to see if the same ideas can lead to improved approximation algorithms in this setting too.
The main difficulty seems to be that in the above we crucially exploited that the tensor $\E_i x_i^{\otimes 4}$ is fully symmetric.

\section{Approximating \texorpdfstring{ $2\rightarrow q$ Norms}{2-to-q Norms}}
\label{sec:2_to_q_technical}
In this section, we will prove the full version of \cref{thm:main_intro}.
Compared to \cref{sec:tech_overview} we also allow constants different from one in the \textbf{NO} case.
We say an algorithm gives an $(\alpha, \beta)$-approximation for the $2 \rightarrow q$ norm if it solves the following:
Given an input $X \in \R^{n \times d}$ with rows $x_1, \ldots, x_n$ satisfying one of the following, decide which one it satisfies.
\begin{itemize}
    \item \textbf{YES}: $\normtwoqexp{X} > \beta$, that is, there is a unit vector $v$ such that $\E_i \iprod{x_i,v}^q > \beta^q$.
    \item \textbf{NO}: $\normtwoqexp{X} \leq \alpha$, that is, for all unit vectors $v$ it holds that $\E_i \iprod{x_i,v}^q \leq \alpha^q$.
\end{itemize}

The following theorem gives an $(\alpha, \beta)$-approximation to the $2 \rightarrow q$ norm problem as long as $\beta / \alpha \geq d^{1/4 - 1/(2q)}$.
Setting $\alpha = 1$ recovers \cref{thm:main_intro}.
\begin{theorem}
    \label{thm:two_to_q_full}
    Let $X \in \R^{n \times d}$, $q \geq 4$ be even, and $\alpha > 0$.
    There exists an algorithm running in time $O(n^2 d^2 + nd^3)$ that given $X$, outputs a list $L$ containing $O(n + d)$ unit vectors such that
    \[
        \normtwoqexp{X} \leq d^{1/4 - 1/(2q)} \cdot \max_{\hat{v} \in L} \normqexp{X \hat{v}} \leq d^{1/4 - 1/(2q)} \cdot  \normtwoqexp{X} \,.
    \] 
    Consequently, by computing $\normqexp{X \hat{v}}$ for all $\hat{v} \in L$, it follows that in the same time we can
    \begin{enumerate}
        \item Certify that $\normtwoqexp{X} \leq \alpha \cdot d^{1/4- 1/(2q)}$, if $\normtwoqexp{X} \leq \alpha$.
        \item Compute a unit vector $\hat{v}$ such that $\normqexp{X \hat{v}} \geq \normtwoqexp{X}/d^{1/4 - 1/(2q)}$. 
    \end{enumerate}
\end{theorem}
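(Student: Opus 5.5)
The plan is to generalize the $q=4$ argument from \cref{sec:tech_overview}, using only $d\times d$ matrices, which fits the claimed runtime. The strategy is a chain of reductions from the operator norm to a Frobenius norm to a spectral norm to a list of proxy directions. I expect the chain closes with a self-referential inequality in $\normtwoqexp{X}$ that we solve at the end; the $(\alpha,\beta)$ consequences then follow immediately.

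\textbf{Step 1 (exact family).} For a unit vector $v$ define $M_v=\E_i\iprod{x_i,v}^{q-2}x_ix_i^\top$. Then $\normtwoqexp{X}^q=\sup_v v^\top M_v v\le\sup_v\normop{M_v}$. By Hölder, $u^\top M_v u\le(\E\iprod{x,v}^q)^{(q-2)/q}(\E\iprod{x,u}^q)^{2/q}$, so this is an equality. Passing to the Frobenius norm costs a factor $\sqrt d$:
\[
\normtwoqexp{X}^q\le\sup_v\normf{M_v}\le\sqrt d\,\normtwoqexp{X}^q .
\]

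\textbf{Step 2 (expand and peel).} Expanding,
\[
\normf{M_v}^2=\E_i\iprod{x_i,v}^{q-2}\,\E_j\iprod{x_j,v}^{q-2}\iprod{x_i,x_j}^2 .
\]
Split $\iprod{x_j,v}^{q-2}=\iprod{x_j,v}^{q-4}\iprod{x_j,v}^2$ and apply Hölder in $j$ to the factor $\iprod{x_j,v}^{q-4}$. This bounds that factor by a power of $\normtwoqexp{X}$, namely $\normtwoqexp{X}^{q-4}$ after normalization. What remains is a quadratic form $v^\top M_i v$ with $M_i=\E_j\iprod{x_i,x_j}^{2}\,\omega_{ij}\,x_jx_j^\top$, where $\omega_{ij}$ is a Hölder weight. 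I would first try $\omega_{ij}=1$, using $\iprod{x_i,x_j}^{q}$-moments instead of fourth moments. Bounding $v^\top M_iv\le\normop{M_i}$ and then handling the outer sum the same way yields the spectral quantity $\tilde M=\E_i\normop{M_i}\,x_ix_i^\top$, after peeling $q-4$ more powers of $\iprod{x_i,v}$ by Hölder. All matrices are $d\times d$. Computing the $M_i$ costs $O(n^2d^2)$, and the eigen-decompositions cost $O(nd^3)$.

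\textbf{Step 3 (list bound).} The list $L$ consists of the basis vectors $e_r$, the normalized rows $\bar x_i$, the top eigenvectors of the $M_i$, and the top eigenvector of $\tilde M$, so $|L|=O(n+d)$. Let $B=\max_{\hat v\in L}\normqexp{X\hat v}$. As in the overview, the argument uses three ingredients:
\begin{itemize}
\item $\normop{M_i}\le\normt{x_i}^2B^{O(1)}$, via Hölder with the top eigenvector of $M_i$ and $\bar x_i$;
\item $\E_i\normt{x_i}^q\le B^q d^{q/2}$, via Minkowski in $L^{q/2}$ applied to $\normt{x}^2=\sum_r x(r)^2$, using the $e_r$;
\item $\normop{\tilde M}\le B^{O(1)}(\E\normt{x}^q)^{2/q}\normqexp{X\tilde u}^{2}$, where $\tilde u$ is the top eigenvector of $\tilde M$.
\end{itemize}
Chaining Steps 1--3 gives an inequality of the form $\normtwoqexp{X}^{a}\le d^{c}B^{a}$, with the peeled powers of $\normtwoqexp{X}$ moved to the left.

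\textbf{Main obstacle.} The hard part is the Hölder bookkeeping in Step 2. It must produce exactly $c/a=1/4-1/(2q)$, and the peeled $\normtwoqexp{X}$ factors must stay on the left rather than being bounded crudely. A naive Hölder split gives a weaker exponent, and for $q=4$ it does not even recover $d^{1/8}$. So I would choose the split that reduces to the overview computation at $q=4$ and interpolates, losing the $\sqrt d$ only once in Step 1. The remaining $d^{1/4-1/q}$-type loss comes from the row-norm bound $(\E\normt{x}^q)^{2/q}\le B^2d$ weighed against the $q$-th power.
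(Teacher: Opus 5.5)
\textbf{The gap is in Step 2.} Step 2 carries the entire content of the theorem, and the mechanism you describe for it does not work. With your $M_v=\E_i\iprod{x_i,v}^{q-2}x_ix_i^\top$, you get $\normf{M_v}^2=\E_{ij}\iprod{x_i,v}^{q-2}\iprod{x_j,v}^{q-2}\iprod{x_i,x_j}^2$, which has degree $q-2$ in $v$ on each index.

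\begin{itemize}
\item Applying H\"older in $j$ to pull out $\iprod{x_j,v}^{q-4}$ (exponents $\tfrac{q}{q-4}$ and $\tfrac q4$) leaves $\Paren{\E_j\abs{\iprod{x_j,v}\iprod{x_i,x_j}}^{q/2}}^{4/q}$. For $q>4$ this is not a quadratic form $v^\top M_i v$. So there is no $d\times d$ matrix $M_i$ whose top eigenvector you could put in $L$.
\item Suppose instead you use $\iprod{x_i,x_j}^q$-moments (your $\omega_{ij}=1$) and bound every peeled factor by $N=\normtwoqexp{X}$. The chain is $\E_j\iprod{x_j,v}^{q-2}\iprod{x_i,x_j}^2\le N^{q-2}B^2\normt{x_i}^2$, and then $\normf{M_v}^2\le N^{2q-4}B^2(\E_i\normt{x_i}^q)^{2/q}\le N^{2q-4}B^4d$. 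This gives only $N\le d^{1/4}B$, which is the baseline.
\end{itemize}

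Your Step 1 loses only $d^{1/(2q)}$. So the entire extra $d^{1/4-1/q}$ loss would have to come from bookkeeping you have not specified. Your ``main obstacle'' paragraph concedes exactly this point.

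\textbf{The missing idea: contract only two modes with $v$.} Write $\E_i\iprod{x_i,v}^q=\iprod{v^{\otimes(q-2)},\E_i\iprod{x_i,v}^2x_i^{\otimes(q-2)}}$ and apply Cauchy--Schwarz. The squared Frobenius norm of this tensor is $\E_{ij}\iprod{x_i,v}^2\iprod{x_j,v}^2\iprod{x_i,x_j}^{q-2}$. This is quadratic in $v$ for each fixed $i$, and all the high power sits on the $v$-independent factor $\iprod{x_i,x_j}^{q-2}$.

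Hence $\normtwoqexp{X}^{2q}\le v^\top\tilde Mv\le\normop{\tilde M}$, where $M_i=\E_j\iprod{x_i,x_j}^{q-2}x_jx_j^\top$ (your $\omega_{ij}=\iprod{x_i,x_j}^{q-4}$). No self-referential step is needed. The tensor is never formed, so the runtime is unaffected.

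H\"older with $\bar x_i,u_i\in L$ then gives $\normop{M_i}\le B^q\normt{x_i}^{q-2}$, not $B^{O(1)}\normt{x_i}^2$. Consequently $\normop{\tilde M}\le B^{q+2}(\E_i\normt{x_i}^q)^{1-2/q}\le B^{2q}d^{(q-2)/2}$; your Minkowski bound on $\E_i\normt{x_i}^q$ is fine here.

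\textbf{Salvaging your $d\times d$ route.} For $q>4$, it can be rescued only by passing through this same quantity. Use the joint split over pairs $(i,j)$
\[
\iprod{x_i,v}^{q-2}\iprod{x_j,v}^{q-2}\iprod{x_i,x_j}^2=\Paren{\iprod{x_i,v}^q\iprod{x_j,v}^q}^{\frac{q-4}{q-2}}\Paren{\iprod{x_i,v}^2\iprod{x_j,v}^2\iprod{x_i,x_j}^{q-2}}^{\frac{2}{q-2}}
\]
with H\"older exponents $\tfrac{q-2}{q-4}$ and $\tfrac{q-2}{2}$. This yields $N^{2q}\le N^{2q(q-4)/(q-2)}\normop{\tilde M}^{2/(q-2)}$, hence $N^{4q/(q-2)}\le B^{4q/(q-2)}d$, i.e.\ $N\le d^{1/4-1/(2q)}B$.

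Note that the peeled exponent here is $2q(q-4)/(q-2)$, not $q-4$. Once you go this way, your Step 1 is a detour.
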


Thus, for the case of $q = 4$ this gives a $d^{1/8}$-approximation.
The proof exactly follows the outline in \cref{sec:tech_overview}.
The only thing that changes is that now $M_i = \E_j \iprod{x_i,x_j}^{q-2}x_jx_j^\top$, we still use $\tilde{M} = \E_i \normop{M_i} x_i x_i^\top$.

\begin{proof}[Proof of \cref{thm:two_to_q_full}]

    Let $q$ be an even integer.
    The proof of the "Consequently, " part is straightforward by checking $\normqexp{X \hat{v}}$ for all $\hat{v} \in L$:
    Since all $\hat{v}$ are unit, $\normtwoqexp{X} \leq \alpha$ implies that $\max_{\hat{v} \in L} \normqexp{X\hat{v}} \leq \alpha$.
    Thus, $\normtwoqexp{X} \leq \alpha \cdot d^{1/4 - 1/(2q)}$.
    The second part follows by rearranging.

    For the rest of the proof we assume without loss of generality that none of the rows of $X$ are 0.

    \paragraph{Reducing to the operator norm of $\tilde{M}$}
    Let $M_i \coloneqq \E_j \iprod{x_i,x_j}^{q-2} x_j x_j^\top$ and consider the matrix $$\tilde{M} = \E_i \normop{M_i} x_i x_i^\top \,.$$
    Then for any unit vector $v$ and $M_v = \E_i \iprod{x_i,v}^2 x_i^{\otimes (q-2)}$, it holds that
    \[
        \normqexp{Xv}^q = \E_i \iprod{x_i,v}^q = \Iprod{v^{\otimes(q-2)}, M_v} \leq \Normf{M_v} = \sqrt{\E_{ij} \iprod{x_i,v}^2 \iprod{x_j,v}^2 \iprod{x_i,x_j}^{q-2}} \,.
    \]
    Next, notice that 
    \begin{align*}
        \Normf{M_v}^2 = \E_{ij} \iprod{x_i,v}^2 \iprod{x_j,v}^2 \iprod{x_i,x_j}^{q-2} &= \E_i \iprod{x_i,v}^2 \cdot v^\top \Paren{\E_j \iprod{x_i,x_j}^{q-2} x_j x_j^\top}v= \E_i \iprod{x_i,v}^2 \cdot v^\top M_i v \\
        &\leq \E_i \iprod{x_i,v}^2 \cdot \normop{M_i} = v^\top \Paren{\E_i \normop{M_i} \cdot x_i x_i^\top}v =v^\top \tilde{M}v \\
        &\leq \normop{\tilde{M}} \,.
    \end{align*}
    Taking a supremum over all unit vectors $v$ then gives $\normtwoqexp{X}^{2q} \leq \normop{\tilde{M}}$.

    \paragraph{Constructing $L$ and upper bounding $\normop{\tilde{M}}$}

    Let $e_i$ be the standard basis vectors in $\R^d$, $\bar{x}_i$ the normalized rows of $X$, $u_i$ the top eigenvectors of the $M_i$ matrices, and $u$ the top eigenvector of $\tilde{M}$.
    We set
    \[
        L = \Set{e_1, \ldots, e_d, \bar{x}_1, \ldots, \bar{x}_n, u_1, \ldots, u_n, u} \,.
    \]
    Clearly $L$ can be computed in time $\poly(n,d)$ given $X$, we analyze the precise runtime below.
    Further, let $B = \max_{\tilde{v} \in L} \normqexp{X \tilde{v}}$.

    We first derive an upper bound for $\normop{M_i}$.
    Note that $\normqexp{X u_i} \leq B$ and $\normqexp{X x_i} \leq \normt{x_i} \cdot B$ for all $i \in [n]$.
    By H\"older's Inequality it follows that
    \begin{equation}
    \begin{aligned}
        \label{eq:M_i_spectral_bound}
        \normop{M_i} &= u_i^\top M_i u_i = \E_j \iprod{x_i,x_j}^{q-2} \iprod{x_j,u_i}^2 \leq \Paren{\E_j \iprod{x_i,x_j}^q }^{1-2/q} \cdot \Paren{\E_j \iprod{x_j,u_i}^q}^{2/q} \\
        &= \normqexp{X x_i}^{q-2} \cdot \normqexp{X u_i}^2 \leq B^q \cdot \normt{x_i}^{q-2} \,.
    \end{aligned}
    \end{equation}

    Thus, averaging over all $i$ and using H\"older's Inequality again, we obtain that
    \begin{equation}
    \begin{aligned}
        \label{eq:M_tilde_spectral_bound}
        \normop{\tilde{M}} &= u^\top \tilde{M}u = \E_i \normop{M_i} \cdot \iprod{x_i,u}^2 \leq B^q \cdot \E_i \norm{x_i}^{q-2} \cdot \iprod{x_i,u}^2 \\
        &\leq B^q \cdot \Paren{\E_i \norm{x_i}^q}^{1-2/q} \cdot \Paren{\E_i \iprod{x_i, u}^q}^{2/q}  = B^q \cdot \Paren{\E_i \norm{x_i}^q}^{1-2/q} \cdot \normqexp{Xu}^2 \\
        &\leq B^{q+2} \cdot \Paren{\E_i \norm{x_i}^q}^{1-2/q} \,.
    \end{aligned}
    \end{equation}

    It remains to bound $\E_i \norm{x_i}^q$.
    Let $\gamma$ denote ordered tuples in $[d]^{q/2}$.
    It follows by a standard calculation using the AM-GM inequality that $\E_i \norm{x_i}^q \leq B^q \cdot d^{q/2}$.
    Indeed,
    \begin{equation}
    \begin{aligned}
        \E_j \norm{x_j}^q &= \E_j \Paren{\sum_{r=1}^d x_j(r)^2}^{q/2} = \sum_{\gamma} \E_j x_j^{2\gamma} = \sum_{\gamma} \E_j \prod_{r = 1}^{q/2}x_j^2(\gamma_r) \leq \sum_{\gamma} \E_j \tfrac 2 q \sum_{r = 1}^{q/2}x_j^q(\gamma_r) \\
        &= \sum_{\gamma} \tfrac 2 q \sum_{r = 1}^{q/2} \E_j\iprod{x_j, e_{\gamma_r}}^q  = \sum_{\gamma} \tfrac 2 q \sum_{r = 1}^{q/2} \normqexp{X e_{\gamma_r}}^q \leq B^q \cdot \sum_{\gamma}  1 = B^q \cdot d^{q/2} \,.
    \end{aligned}
    \end{equation}

    Thus, overall we obtain that $\normop{\tilde{M}} \leq B^{2q} \cdot d^{(q-2)/2}$, which in turn implies that $\normtwoqexp{X} \leq \normop{\tilde{M}}^{1/2q} \leq B \cdot d^{\tfrac 1 4 - \tfrac 1 {2q}}$.

    \paragraph{Runtime}

    Checking all directions $e_j$ and $\bar{x}_i$ can be done in time $O(n^2 d)$.
    It takes time $O(nd^2)$ to construct a single matrix $M_i$ and additionally time $O(d^3)$ to compute its spectral norm and top eigenvector.
    Thus, checking all $u_i$ takes time $O(n^2d^2 + nd^3 + nd) = O(n^2 d^2 + nd^3)$.
    Lastly, to compute $\tilde{M}$ and its top eigenvector then takes time $O(nd^2 + d^3 + nd^2 +d^3) = O(nd^2 + d^3)$.
    Checking this top eigenvector again takes time $O(nd)$.
    The total runtime is thus $O(n^2d^2 + nd^3)$.

\end{proof}

\subsection{Sum-of-Squares Certificate}
\label{sec:sos_moment_technical}

In this section we show that there also exists a sum-of-squares certificate for the upper bound in \cref{thm:two_to_q_full}.
We first recall the definition of (certifiably and non-certifiably) bounded and hyper-contractive distributions from \cite{kothari2018robust,hopkins2018mixture}.

Let $\cD$ be a distribution over $\R^d$ with mean zero and covariance $\Sigma$.
We say that $\cD$ is 
\begin{itemize}
    \item (non-certifiably) $(B_q,q)$ bounded if for any $v \in \R^d$, $\E_{x \sim \cD} \iprod{x,v}^q \leq B_q^q \normt{v}^q$.
    \item (non-certifiably) $(B_q,q)$ hyper-contractive if for any $v \in \R^d$, $\E_{x \sim \cD} \iprod{x,v}^q \leq B_q^q \iprod{v, \Sigma v}^{q/2}$.
\end{itemize}

For the certifiable versions, we say a polynomial $p$ in formal variables $v$ is a sum of squares of degree $\ell$ if it can be written as $p(v) = \sum_r q_r(v)^2$ and each $q_r^2$ is a polynomial of degree at most $\ell$.
We then define the following \cite{kothari2018robust,hopkins2018mixture}.
\begin{definition}
    \label{def:cert_moments}
    Let $q \in \N$ be even and $B_q > 0$. We say a distribution over $\R^d$ with covariance $\Sigma$ is
    \begin{enumerate}
        \item \emph{$(B_q,q)$-certifiably bounded at degree $\ell$}, if $p(v) = B_q^q \cdot \normt{v}^q - \E_{x \sim \cD} \iprod{x,v}^q$ is a sum of squares of degree $\ell$.
        \item \emph{$(B_q,q)$-certifiably hyper-contractive at degree $\ell$} if $p'(v) = B_q^q \cdot \iprod{v,\Sigma v}^{q/2} - \E_{x \sim \cD} \iprod{x,v}^q$ is a sum of squares of degree $\ell$.
    \end{enumerate}
\end{definition}

We show that any distribution that is (non-certifiably) bounded/hyper-contractive is also certifiably bounded/hyper-contractive with a $d^{1/4 - 1/(2q)}$ loss in parameters.
\stefan{Probably the degree is something like $4q$ or so}
\begin{theorem}[Full version of \cref{thm:sos_moment_intro}]
    \label{thm:sos_moment_technical}
    Let $q \in \N$ be even and $\cD$ be a distribution over $\R^d$.
    Then,
    \begin{enumerate}
        \item If $\cD$ is $(B_q,q)$ bounded, then $\cD$ is also $(d^{1/4 - 1/(2q)} B_q,q)$ certifiably bounded at degree $O(q)$.
        \item If $\cD$ is $(B_q,q)$ hyper-contractive, then $\cD$ is also $(d^{1/4 - 1/(2q)} B_q,q)$ certifiably hyper-contractive at degree $O(q)$.
    \end{enumerate}
\end{theorem}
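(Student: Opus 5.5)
The plan is to reuse the proof of \cref{thm:two_to_q_full} almost verbatim, splitting it into two parts. Only the \emph{first} part, the reduction $\normqexp{Xv}^{2q} \le \normop{\tilde M}$, involves the formal variable $v$, so only that part has to be turned into a sum-of-squares proof. The \emph{second} part bounds the scalar $\normop{\tilde M}$. There the list $L$ is used only to bound numbers, so it can be kept as a purely analytic statement about the true distribution. Concretely, for a general $\cD$ I set $M_x = \E_{y\sim\cD}\iprod{x,y}^{q-2} yy^\top$ and $\tilde M = \E_{x\sim\cD}\normop{M_x} xx^\top$. The three H\"older steps \cref{eq:M_i_spectral_bound}, \cref{eq:M_tilde_spectral_bound} and the AM-GM bound on $\E\normt{x}^q$ go through unchanged with $\E_i$ replaced by $\E_{x\sim\cD}$. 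Each of them only uses $\E\iprod{x,w}^q \le B_q^q\normt{w}^q$ for specific vectors $w$ (namely $e_r$, $x$, top eigenvectors), and this is exactly the non-certifiable $(B_q,q)$-boundedness. They give $\normop{\tilde M} \le B_q^{2q} d^{(q-2)/2}$ as a real number.

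For the SoS part, I will verify each inequality of the first paragraph of the proof of \cref{thm:two_to_q_full} as a polynomial inequality in $v$.
\begin{itemize}
\item[(a)] $\normop{M_x}\normt{v}^2 - v^\top M_x v$ is a quadratic form with PSD matrix, hence a sum of squares. Multiplying by the square $\iprod{x,v}^2$ and averaging over $x$ gives an SoS certificate of $\normf{M_v}^2 \le v^\top\tilde M v\,\normt{v}^2$. Here $M_v = \E\iprod{x,v}^2 x^{\otimes(q-2)}$, and the expansion $\normf{M_v}^2 = \E_x \iprod{x,v}^2 v^\top M_x v$ is a polynomial identity.
\item[(b)] Since $\normop{\tilde M}I - \tilde M \succeq 0$, we get $\normf{M_v}^2 \preceq \normop{\tilde M}\normt{v}^4$ at degree $4$.
\item[(c)] For the Cauchy--Schwarz step $\iprod{v^{\otimes(q-2)},M_v}\le\normt{v}^{q-2}\normf{M_v}$ I avoid square roots via SoS AM-GM. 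For every $\lambda>0$ the tensor square $\normf{\sqrt{\lambda}\,a - b/\sqrt{\lambda}}^2 \ge 0$ gives $2\iprod{a,b}\le \lambda\normf{a}^2+\lambda^{-1}\normf{b}^2$. I apply this with degree-balanced tensors $a=v^{\otimes(q-2)}$ and $b=\normt{v}^{q-4}M_v$; both have degree $q-2$ in $v$. Combined with (b) and $\lambda=\normop{\tilde M}^{1/2}$, this yields an SoS proof at degree $2q-4$ of
\[
\normt{v}^{q-4}\,\E\iprod{x,v}^q \le \normop{\tilde M}^{1/2}\normt{v}^{2q-4}\le B_q^q d^{q/4-1/2}\normt{v}^{2q-4}.
\]
\end{itemize}
The target constant is right: $(d^{1/4-1/(2q)}B_q)^q = d^{q/4-1/2}B_q^q$.

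The main obstacle is removing the extra factor $\normt{v}^{q-4}$; for $q=4$ there is nothing to do. I would first try to rebalance the Cauchy--Schwarz step directly, writing $\E\iprod{x,v}^q$ as an inner product of two degree-$q/2$ tensors. For example, take $\E\iprod{x,v}^{q/2}\cdot\iprod{x,v}^{q/2}$ and split the contraction so that one side is $v^{\otimes q/2}$-like and the other is a partial contraction whose squared Frobenius norm expands into $\E_{x,y}\iprod{x,v}^2\iprod{y,v}^2\iprod{x,y}^{q-2}$ up to $\normt{v}$ powers. If that fails, I fall back to the observation that in all downstream applications \cite{kothari2018robust,hopkins2018mixture} certifiable boundedness is used inside an SoS system containing the constraint $\normt{v}^2=1$. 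Under that constraint the factor $\normt{v}^{q-4}$ can be replaced by $1$ at degree $O(q)$, and the homogeneous statement is recovered in the usual way. I expect the final degree to be about $2q$.

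The hypercontractive case then follows by a linear change of variables. Apply part~1 to the whitened distribution of $\Sigma^{-1/2}x$ (on the range of $\Sigma$), which is $(B_q,q)$-bounded. Then substitute $v=\Sigma^{1/2}w$ into the resulting SoS identity. Linear substitutions preserve sums of squares and their degree, and $\normt{\Sigma^{1/2}w}^q = \iprod{w,\Sigma w}^{q/2}$.
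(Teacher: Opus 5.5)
Your reduction matches the paper's proof. The degree-$4$ SoS bound $\normf{M_v}^2\le\normop{\tilde M}\normt{v}^4$ comes from two uses of \cref{fact:spectral_sos}. The purely numerical bound $\normop{\tilde M}\le B_q^{2q}d^{q/2-1}$ comes from rerunning the H\"older/AM-GM estimates of \cref{thm:two_to_q_full}. Part 2 follows by whitening.

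The gap is step (c) when $q\ge 6$. Balancing degrees by multiplying $M_v$ by $\normt{v}^{q-4}$ only certifies that $\normt{v}^{q-4}p(v)$ is a sum of squares, where $p(v)=B_q^q d^{q/4-1/2}\normt{v}^q-\E\iprod{x,v}^q$. You cannot divide out a power of $\normt{v}^2$ in the SoS cone: the Motzkin form is not SoS, although its product with $x^2+y^2+z^2$ is.

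Your fallback does not repair this. For a homogeneous $p$, an SoS proof of $p\ge 0$ from $\normt{v}^2=1$ amounts to $\normt{v}^{2k}p$ being SoS for some $k$. That is the same weaker conclusion, not \cref{def:cert_moments}. As written, your argument proves the theorem only for $q=4$.

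The rebalancing you mention is the right fix, but it needs an inequality you have not supplied, and it is not an expansion identity. Instead of multiplying $b$ by $\normt{v}^{q-4}$, move $(q-4)/2$ copies of $v$ from $a$ into $b$ by contraction. For $\alpha\in[d]^{q/2}$, write $(M_v)_\alpha\in(\R^d)^{\otimes(q-4)/2}$ for the slice of $M_v$, and set $(N_v)_\alpha=\iprod{v^{\otimes(q-4)/2},(M_v)_\alpha}$. Then $\E\iprod{x,v}^q=\sum_\alpha v^\alpha (N_v)_\alpha$, with both factors of degree $q/2$.

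Here $N_v=\E\iprod{x,v}^{q/2}x^{\otimes q/2}$. Its squared Frobenius norm is $\E_{x,y}\iprod{x,v}^{q/2}\iprod{y,v}^{q/2}\iprod{x,y}^{q/2}$, which is \emph{not} $\normt{v}^{q-4}\normf{M_v}^2$. Instead, apply SoS Cauchy--Schwarz to each slice: $(N_v)_\alpha^2\le\normt{v}^{q-4}\normf{(M_v)_\alpha}^2$. Then
\[
2\,\E\iprod{x,v}^q\le\lambda\normt{v}^q+\lambda^{-1}\normf{N_v}^2\le\lambda\normt{v}^q+\lambda^{-1}\normt{v}^{q-4}\normf{M_v}^2\le\bigl(\lambda+\lambda^{-1}\normop{\tilde M}\bigr)\normt{v}^q .
\]
The last step multiplies your certificate from (a) and (b) by the SoS polynomial $\normt{v}^{q-4}$. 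Taking $\lambda=\normop{\tilde M}^{1/2}$ gives the claim, homogeneously at degree $q$. This is exactly the content of \cref{fact:matrix_cs_sos} in the paper.
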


To show this theorem, we show that the proof of \cref{thm:two_to_q_full} is captured by the degree-$O(q)$ sum-of-squares proof system, defined below.

\paragraph{Sum-of-squares basics}

To keep the exposition self-contained, we first recall some basics about the sum-of-squares proof system.
Let $\cA = \set{r_1(X), \ldots, r_m(X) \geq 0}$ be a (possibly empty) system of polynomial inequalities in formal variables $X$.
Let $p$ be another polynomial in $X$, we say that $\cA$ implies $p$ at degree $t$, denoted by $\cA \proves{t}{X} p \geq 0$, if we can write $p(X) = \sum_{S \sse [m]} b_S(X) \prod_{i \in S} r_i(X)$ where each $b_S$ is a sum of squares and each summand has degree at most $t$.
When $\cB = \set{r'_1(X) \geq 0, \ldots, r'_{m'}(X) \geq 0}$ is another system of polynomial inequalities, we write $\cA \proves{t}{X} \cB$ if $\cA \proves{t}{X} r'_j \geq 0$ for all $j \in [m']$.
We use the following transitivity rule: If $\cA \proves{t}{X} \cB$ and $\cB \proves{t'}{X} p \geq 0$, then $\cA \proves{t \cdot t'}{X} p \geq 0$.
We write $\cA \proves{t}{X} p \leq p'$ if $\cA \proves{t}{X} p' - p \geq 0$.

We will use the following facts.
The first one follows directly since there is a matrix $L \in \R^{d \times d}$ such that $A = \normop{A} \cdot I_d - LL^\top$, the second one we prove at the end of this section.
\begin{fact}
    \label{fact:spectral_sos}
    Let $A \in \R^{d \times d}$ be a real-valued symmetric matrix and $v$ be a $d$-dimensional indeterminate.
    Then $\proves{2}{v} v^\top A v \leq \normop{A} \normt{v}^2$ \,.
\end{fact}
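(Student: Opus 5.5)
The plan is to take a spectral decomposition of $A$ and write the gap $\normop{A} I_d - A$ explicitly as a Gram matrix. Since $A$ is real symmetric, I write $A = \sum_{r=1}^d \lambda_r w_r w_r^\top$ with an orthonormal eigenbasis $w_1,\ldots,w_d$ and real eigenvalues $\lambda_r$. Each eigenvalue satisfies $\lambda_r \leq \normop{A}$, because $\normop{A} = \max_r \abs{\lambda_r}$.

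The next step is to use $I_d = \sum_r w_r w_r^\top$ to write
\[
\normop{A} I_d - A = \sum_{r=1}^d \Paren{\normop{A} - \lambda_r} w_r w_r^\top = L L^\top,
\]
where $L$ is the $d \times d$ matrix whose $r$-th column is $\sqrt{\normop{A} - \lambda_r}\, w_r$. The square root is well defined since each coefficient $\normop{A} - \lambda_r$ is nonnegative.

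Finally, I substitute the indeterminate $v$. This gives
\[
\normop{A}\normt{v}^2 - v^\top A v = v^\top L L^\top v = \sum_{r=1}^d \Paren{\sqrt{\normop{A}-\lambda_r}\,\iprod{w_r, v}}^2.
\]
Each summand is the square of a linear form in $v$, so it has degree $2$. This is exactly a degree-$2$ sum-of-squares certificate with no axioms, which proves $\proves{2}{v} v^\top A v \leq \normop{A}\normt{v}^2$.

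I do not expect a real obstacle. The only point that needs care is that the coefficients are nonnegative, which uses $\lambda_r \leq \max_{r'} \abs{\lambda_{r'}}$, together with noting that symmetry is what guarantees a real orthonormal eigenbasis.
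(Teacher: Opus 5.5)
Your proposal is correct and takes the same approach as the paper: the paper also writes $\normop{A} I_d - A = LL^\top$, so that $\normop{A}\normt{v}^2 - v^\top A v = \normt{L^\top v}^2$ is a sum of squares of linear forms. The only difference is that you construct $L$ explicitly from the spectral decomposition, whereas the paper just asserts that such an $L$ exists.
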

\begin{fact}
    \label{fact:matrix_cs_sos}
    Let $B > 0$ and $q \geq 4$ be even.
    Let $v$ be a $d$-dimensional vector-valued indeterminate, and $M$ be a $d^{q-2}$-dimensional tensor-valued indeterminate.
    Then
    \[
        \Set{\normf{M}^2 \leq B^2  \normt{v}^4} \proves{2q}{M,v} \iprod{v^{\otimes(q-2)}, M } \leq B  \normt{v}^q\,.
    \]
\end{fact}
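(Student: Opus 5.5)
The plan is to certify the inequality through one explicit square plus one use of the axiom multiplied by a square. The idea is to split the $q-2$ copies of $v$ in $v^{\otimes(q-2)}$ into two groups. Two copies are matched against $M$ "in norm", where the axiom $\normf{M}^2 \leq B^2\normt{v}^4$ controls them. The remaining $q-4$ copies are balanced symmetrically, half on each side of a Cauchy--Schwarz-type square. This makes both sides of that square homogeneous of the right degree. Because $q$ is even, $(q-4)/2$ is an integer, so every object below is a genuine polynomial tensor.

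\textbf{Step 1 (setup).} Set $k = (q-4)/2$ and $u = v^{\otimes k}$, a $d^{k}$-dimensional tensor with $\normt{u}^2 = \normt{v}^{q-4}$. Define $X$ as the contraction of $M$ with $u$ on its last $k$ modes, i.e. $X = M(\cdot,\ldots,\cdot,u)$. This $X$ has $(q-2)-k = q/2$ modes, and its entries are bilinear in $(M,v)$. By construction
\[
\Iprod{v^{\otimes 2}\otimes u, X} = \Iprod{v^{\otimes(q-2)}, M}, \qquad \Normf{B\, v^{\otimes 2}\otimes u}^2 = B^2 \normt{v}^4\normt{v}^{q-4} = B^2\normt{v}^q .
\]
For $q=4$ we have $k=0$ and $X=M$, and the argument below degenerates to the usual $2\iprod{a,b}\leq B\normt{a}^2 + \normt{b}^2/B$.

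\textbf{Step 2 (the square).} Expanding the trivially nonnegative square gives
\[
0 \leq \Normf{B\, v^{\otimes 2}\otimes u - X}^2 = B^2\normt{v}^q - 2B\Iprod{v^{\otimes(q-2)},M} + \Normf{X}^2 .
\]

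\textbf{Step 3 (SoS Cauchy--Schwarz).} Next I bound $\Normf{X}^2 \leq \normf{M}^2\normt{u}^2 = \normf{M}^2\normt{v}^{q-4}$ in SoS. Each entry of $X$ is an inner product of a row-slice $M_I$ of $M$ with $u$. Lagrange's identity $\normt{M_I}^2\normt{u}^2 - \iprod{M_I,u}^2 = \tfrac12\sum_{J,J'}(M_{I,J}u_{J'} - M_{I,J'}u_J)^2$ writes the difference as a sum of squares; summing over $I$ gives the bound. These squares have degree $2+2k = q-2$ in $(M,v)$.

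\textbf{Step 4 (apply the axiom).} I multiply the axiom $B^2\normt{v}^4 - \normf{M}^2 \geq 0$ by the square $\normt{v}^{q-4} = (\normt{v}^2)^{k}$. This gives $\normf{M}^2\normt{v}^{q-4} \leq B^2\normt{v}^q$. Each summand has degree at most $q$, well within the allowed $2q$.

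\textbf{Step 5 (combine).} Chaining Steps 2--4 yields $2B\iprod{v^{\otimes(q-2)},M} \leq 2B^2\normt{v}^q$. Dividing by the positive constant $2B$ gives exactly $\iprod{v^{\otimes(q-2)},M} \leq B\normt{v}^q$. The resulting certificate has the form $\sigma_0 + \sigma_1\cdot(B^2\normt{v}^4-\normf{M}^2)$ with $\sigma_1 = \normt{v}^{q-4}/(2B)$ a sum of squares. So it is a degree-$2q$ derivation from the single axiom, with unconstrained indeterminates $M$ and $v$.

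\textbf{Main obstacle.} The delicate point is Step 1, namely choosing how to distribute the copies of $v$. The naive square $\normt{c_1 v^{\otimes(q-2)} - c_2 M}^2$ produces the mismatched degrees $\normt{v}^{2q-4}$ and $\normt{v}^4$. These can only be reconciled by AM--GM in the wrong direction, or by an extra $\normt{v}^2$ factor on the left-hand side. Moving half of the surplus $q-4$ copies onto the $M$ side, via the contraction $X$, is exactly what makes both halves of the square homogeneous of degree $q$. Once that choice is made, the remaining steps are routine.
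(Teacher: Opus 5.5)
Your proof is correct and is essentially the paper's own argument. Your $X$ is exactly the paper's collection of slice contractions $\iprod{v^{\otimes(q-4)/2}, M_\alpha}$ over $\alpha\in[d]^{q/2}$, and your square $\normf{B\,v^{\otimes q/2} - X}^2 \geq 0$ is their coordinatewise AM--GM step with $\gamma = 1/B$. Your Lagrange identity is the SoS Cauchy--Schwarz they apply slice by slice, and both arguments then multiply the axiom by $\normt{v}^{q-4}$; the only difference is that you write the certificate out explicitly, which also shows its degree is at most $q$, within the stated $2q$.
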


\paragraph{Proof of \cref{thm:sos_moment_technical}}
Recall that in the proof of \cref{thm:two_to_q_full}, we started with $\E_x \iprod{x,v}^q \leq \normf{M_v}$ for $M_v = \E_x \iprod{x,v}^2 x^{\otimes(q-2)}$
The only technical complication is that the Frobenius norm of $M_v$ is not a polynomial in the formal variables $v$.
However, we can still obtain a bound on the square of this Frobenius norm and conclude with \cref{fact:matrix_cs_sos}.

\begin{proof}
    The second part (certifiable hyper-contractivity) can be derived from the first part using a standard whitening argument (via the pseudo-inverse of $\Sigma$).

    Whenever we omit the degree of the sum-of-squares proof, we will implicitly assume that it is an absolute constant, independent of $q$.
    Let $M_v = \E_{x \sim \cD} \iprod{x,v}^2 x^{\otimes (q-2)}$, we will show that $\proves{}{v} \normf{M_v}^2 \leq (B_q^q \cdot d^{q/4 - 1/2})^2 \cdot \normt{v}^4$ shortly.
    Assuming this, we obtain using transitivity and Fact~\ref{fact:matrix_cs_sos} that
    \[
        \proves{O(q)}{v} \E_{x \sim \cD} \iprod{x,v}^q = \iprod{v^{\otimes(q-2)}, M_v} \leq B_q^q \cdot d^{q/4 - 1/2} \cdot \normt{v}^q
    \]
    which gives the conclusion.

    To derive the SoS upper bound on $\normf{M_v}^2$, let $\tilde{M} = \E_{x \sim \cD} \normop{M_x} xx^\top$, where $M_x = \E_{x' \sim \cD}\iprod{x,x'}^{q-2} x' {x'}^\top$ for a fixed vector $x$.
    Using Fact~\ref{fact:spectral_sos} twice it follows that 
    \begin{align*}
        \proves{}{v,M_v} \normf{M_v}^2 &= \E_{x,x' \sim \cD} \iprod{x,v}^2 \iprod{x',v}^2 \iprod{x,x'}^{q-2} = \E_{x \sim \cD} \iprod{x,v}^2 \cdot  v^\top M_x v \leq \Paren{\E_{x \sim \cD} \iprod{x,v}^2 \cdot \normop{M_x}} \cdot \normt{v}^2 \\
        &= v^\top \tilde{M} v \cdot \normt{v}^2 \leq \normop{\tilde{M}} \cdot \normt{v}^4 \,.
    \end{align*}
    Thus, to complete the proof it suffices to show the (non-SoS) bound $\normop{\tilde{M}} \leq (B_q^q \cdot d^{q/4 - 1/2})^2$, this will be basically the same as in the proof of \cref{thm:two_to_q_full}.

    \paragraph{Bounding the spectral norm of $\tilde{M}$}

    We first claim that $\normop{M_x} \leq B_q^q \normt{x}^{q-2}$ for any $x$.
    Indeed, let $u$ be the top eigenvector of $M_x$.
    Note that this is a fixed real vector, \emph{not} a sum-of-squares variable.
    By H\"older's Inequality and using that $\E_{x' \sim \cD} \iprod{x,x'}^q \leq B_q^q \normt{x}^q$ we obtain
    \[
        \normop{M_x} = \E_{x'} \iprod{x,x'}^{q-2} \iprod{x',u}^2 \leq \Paren{\E_{x'} \iprod{x,x'}^q}^{(q-2)/q} \cdot \Paren{\E_{x'} \iprod{x',u}^q}^{2/q} \leq B_q^q \cdot \normt{x}^{q-2} \,.
    \]

    Next, let $\tilde{u}$ be the top eigenvector of $\tilde{M}$.
    We remark that exactly the same argument as in the proof of \cref{thm:two_to_q_full} implies that $\E_x \normt{x}^q \leq B_q^q d^{q/2}$.
    Then again by H\"older's Inequality and the previous bound we obtain 
    \[
        \normop{\tilde{M}} = \E_x \normop{M_x} \iprod{x,\tilde{u}}^2 \leq B_q^q \cdot \E_x \normt{x}^{q-2} \iprod{x,\tilde{u}}^2 \leq B_q^q \cdot \Paren{\E_x \normt{x}^q}^{(q-2)/q} \Paren{\E_x \iprod{x,\tilde{u}}^q}^{2/q} \leq B_q^{2q} \cdot d^{q/2 -1} \,,
    \]
    which is equal to $(B_q^q \cdot d^{q/4 - 1/2})^2$.
\end{proof}

\paragraph{Proof of Fact~\ref{fact:matrix_cs_sos}}

The proof uses the following basic SoS facts.
The first one follows since $(a-b)^2 \geq 0$ for any $a,b$.
The first fact implies that $\proves{2}{A,B} \iprod{A,B} \leq \tfrac 1 2 \normt{A}^2 + \tfrac 1 2 \normt{B}^2$, applying this to $A = X \otimes Y$ and $B = Y \otimes X$ gives the second fact.
\begin{enumerate}
    \item For any $\gamma > 0$, $\proves{}{X,Y} X Y \leq \tfrac{1}{2\gamma} X^2 + \tfrac{\gamma}{2} Y^2$ (AM-GM).
    \item $\proves{4}{X,Y} \iprod{X,Y}^2 \leq \normt{X}^2 \normt{Y}^2$ (Cauchy-Schwarz).
\end{enumerate}

\begin{proof}
Let $\alpha, \beta$ be ordered tuples over $[d]^{q/2}$ and $[d]^{(q-4)/2}$ respectively.
For a fixed $\alpha$, denote by $M_\alpha$ the $d^{(q-4)/2}$-dimensional collection of variables that contains all variables $M_{\alpha,\beta}$ for all possible $\beta$.
Let $\gamma > 0$, it then follows by first using AM-GM and then Cauchy-Schwarz that
\begin{align*}
    \Set{\normf{M}^2 \leq B^2 \normt{v}^4}&\proves{}{v,M} \iprod{v^{\otimes(q-2)}, M} = \iprod{v^{\otimes q/2} \otimes v^{\otimes(q-4)/2}, M} = \sum_{\alpha \in [d]^{q/2}} v^\alpha \iprod{v^{\otimes(q-4)/2}, M_\alpha} \\
    &\leq \frac{1}{2\gamma } \sum_{\alpha \in [d]^{q/2}} v^{2\alpha} + \frac{\gamma}{2} \sum_{\alpha \in [d]^{q/2}}\iprod{v^{\otimes(q-4)/2}, M_\alpha} ^2 \\
    &\leq \frac{1}{2\gamma } \sum_{\alpha \in [d]^{q/2}} v^{2\alpha} + \frac{\gamma}{2} \sum_{\alpha \in [d]^{q/2}} \normt{v^{\otimes(q-4)/2}}^2 \cdot \normf{M_\alpha}^2 \\
    &= \frac{1}{2\gamma } \normt{v}^q + \frac{\gamma}{2} \normt{v}^{q-4} \cdot \normf{M}^2 \leq \frac{1}{2\gamma } \normt{v}^q + \frac{\gamma B^2}{2} \normt{v}^{q} \,.
\end{align*}
Choosing $\gamma = 1/B$ then finishes the proof.
\end{proof}

\subsection{Interpolation to Other Hypercontractive Norms}
\label{sec:interpolation_pq}
Using standard interpolation arguments, we can extend \cref{thm:two_to_q_full} to other $p \rightarrow q$ norms in the hypercontractive setting, for arbitrary $1 \leq p < q$ and even $q$.
To avoid confusion, in this section we always use standard $\ell_p$ and $\ell_q$ norms instead of expectation norms.
\begin{theorem}
    \label{thm:p_to_q_full}
    Let $X \in \R^{n \times d}$ and $\alpha> 0$.
    Let $q \in \N$ be even and $1 \leq p \leq q$.
    Let $\gamma_p = 1/p - 1/2$ when $p \leq 2$ and $\gamma_p = 1/2 - 1/p$ when $p \geq 2$.
    There exists an algorithm running in time $O(n^2 d^2 + nd^3)$ that given $X$, outputs a list $L$ containing $O(n + d)$ vectors such that
    \[
        \normpq{X} \leq d^{\gamma_p + 1/4 - 1/(2q)} \cdot \max_{\hat{v} \in L} \frac{\normq{X \hat{v}}}{\norm{\hat{v}}_p} \leq d^{\gamma_p + 1/4 - 1/(2q)} \cdot  \normpq{X} \,.
    \] 
    Consequently, by computing $\normq{X \hat{v}}$ for all $\hat{v} \in L$, it follows that in the same time we can
    \begin{enumerate}
        \item Certify that $\normpq{X} \leq \alpha \cdot d^{\gamma_p + 1/4 - 1/(2q)}$, if $\normpq{X} \leq \alpha$.
        \item Compute a vector $\hat{v}$ such that $\norm{\hat{v}}_p = 1$ and $\normq{X \hat{v}} > \normpq{X}/d^{\gamma_p + 1/4 - 1/(2q)}$.
    \end{enumerate}
\end{theorem}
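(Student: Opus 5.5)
Apply \cref{thm:two_to_q_full} to $X$ unchanged, using the same list $L$. Convert its guarantee from the $2\to q$ norm to the $p\to q$ norm using the standard comparison between the $\ell_p$ and $\ell_2$ norms on $\R^d$. In standard (non-expectation) norms the guarantee of \cref{thm:two_to_q_full} reads
\[
\normtwoq{X} \leq d^{1/4-1/(2q)} \max_{\hat v\in L}\normq{X\hat v} \leq d^{1/4-1/(2q)}\normtwoq{X},
\]
since both sides carry the same factor $n^{1/q}$. The elements of $L$ are unit in $\ell_2$.

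\textbf{Norm comparison.} For every $v\in\R^d$,
\[
d^{-\max(0,\,1/2-1/p)}\normt{v} \leq \norm{v}_p \leq d^{\max(0,\,1/p-1/2)}\normt{v}.
\]
For $p\geq 2$ this means $\norm{v}_p\leq \normt{v}\leq d^{1/2-1/p}\norm{v}_p$; for $p\leq 2$ it means $\normt{v}\leq \norm{v}_p\leq d^{1/p-1/2}\normt{v}$.

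\textbf{Case $p\geq 2$.} First, $\normpq{X}\leq d^{\gamma_p}\normtwoq{X}$, because any $v$ with $\norm{v}_p=1$ has $\normt{v}\leq d^{\gamma_p}$. Second, for each $\hat v\in L$, $\norm{\hat v}_p\leq 1$, so
\[
\normq{X\hat v}\leq \frac{\normq{X\hat v}}{\norm{\hat v}_p}\leq \normpq{X}.
\]
The first inequality, combined with the $2\to q$ guarantee, gives the left inequality of the theorem. The second gives the right inequality, which in this case holds even without the factor $d^{\gamma_p+1/4-1/(2q)}$; that factor is at least $1$ since $q\geq 2$.

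\textbf{Case $p\leq 2$.} Every unit $\ell_2$ vector has $\norm{v}_p\geq 1$, so $\normpq{X}\leq\normtwoq{X}$. In the other direction, $\norm{\hat v}_p\leq d^{\gamma_p}$ for $\hat v\in L$, hence
\[
\normq{X\hat v}\leq d^{\gamma_p}\frac{\normq{X\hat v}}{\norm{\hat v}_p}.
\]
Chaining these gives
\[
\normpq{X}\leq d^{1/4-1/(2q)}\max_{\hat v\in L}\normq{X\hat v}\leq d^{\gamma_p+1/4-1/(2q)}\max_{\hat v\in L}\frac{\normq{X\hat v}}{\norm{\hat v}_p}.
\]
The right inequality of the theorem holds trivially, since the ratio is at most $\normpq{X}$ by definition.

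\textbf{Consequences and runtime.} The two consequences follow exactly as in \cref{thm:two_to_q_full}. In the NO case every ratio is at most $\alpha$, which certifies the bound. For the search version, take the maximizing $\hat v$ and rescale it to unit $\ell_p$ norm. The runtime is that of \cref{thm:two_to_q_full} plus $O((n+d)d)$ to compute the $\ell_p$ norms of the list elements. Note that in the $p\leq 2$ case the first inequality of the chain uses the $2\to q$ bound with unnormalized vectors, which is valid. The main point needing care is bookkeeping the direction of each norm comparison in the two cases; I expect no genuine obstacle. The stated strict inequality in item 2 should be read as $\geq$, or as strict whenever $X\neq 0$ and the factor exceeds $1$.
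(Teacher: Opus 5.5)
Your proposal is correct and is essentially the paper's proof. Like the paper, you forward the list from \cref{thm:two_to_q_full} and, separately for $p\le 2$ and $p\ge 2$, use the $\ell_p$-versus-$\ell_2$ comparison (Facts~\ref{fact:pq_interpolation} and~\ref{fact:pvsq} in the paper) twice: to relate $\normpq{X}$ to $\normtwoq{X}$, and to pass from $\normq{X\hat v}$ to $\normq{X\hat v}/\norm{\hat v}_p$ for the $\ell_2$-unit list elements. Your caveat on item~2 is also fair: this chain, like the paper's (which defers to \cref{thm:two_to_q_full}), only yields $\geq$, and your alternative strict reading would need an additional equality-case argument.
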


In particular, this follows immediately by combining \cref{thm:two_to_q_full} with the following (basic) interpolation lemma, proved for completeness at the end of this section.
\begin{fact}
    \label{fact:pq_interpolation}
    For any $X \in \R^{n \times d}$ and $1 \leq p \leq q$ it holds that
    \begin{enumerate}
        \item If $p \leq 2$, then $\normtwoq{X} \leq d^{1/p - 1/2}\normpq{X} \leq d^{1/p - 1/2} \normtwoq{X}$.
        \item If $p \geq 2$, then $\normtwoq{X} \leq \normpq{X} \leq d^{1/2 - 1/p} \normtwoq{X}$.
    \end{enumerate}
\end{fact}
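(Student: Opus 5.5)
The plan is to compare the $\ell_p$ and $\ell_2$ norms on $\R^d$ and feed the comparison directly into the definitions of $\normpq{X}$ and $\normtwoq{X}$. The only ingredient is the standard norm comparison on $\R^d$: for $1 \leq p \leq 2$ we have $\normt{v} \leq \norm{v}_p \leq d^{1/p-1/2}\normt{v}$, and for $p \geq 2$ we have $\norm{v}_p \leq \normt{v} \leq d^{1/2-1/p}\norm{v}_p$. Both follow from monotonicity of $\ell_r$ norms in $r$ together with H\"older's inequality applied to $\sum_r \abs{v_r}^{\min(p,2)} \cdot 1$. The sums involve $d$ terms, which is why the factor is in terms of $d$ rather than $n$. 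Since both operator norms are suprema of the homogeneous ratio $\normq{Xv}/\norm{v}_\bullet$ over $v \neq 0$, each inequality comes from comparing denominators.

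\textbf{Case $p \geq 2$.} Since $\norm{v}_p \leq \normt{v}$ for every $v \neq 0$, we get $\normq{Xv}/\normt{v} \leq \normq{Xv}/\norm{v}_p \leq \normpq{X}$, and taking the supremum gives $\normtwoq{X} \leq \normpq{X}$. Conversely, $\norm{v}_p \geq d^{-(1/2-1/p)}\normt{v}$ gives $\normq{Xv}/\norm{v}_p \leq d^{1/2-1/p}\,\normq{Xv}/\normt{v} \leq d^{1/2-1/p}\normtwoq{X}$, and taking the supremum over $v$ finishes this case.

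\textbf{Case $p \leq 2$.} Since $\normt{v} \leq \norm{v}_p$, we get $\normq{Xv}/\norm{v}_p \leq \normq{Xv}/\normt{v}$, hence $\normpq{X} \leq \normtwoq{X}$. Multiplying by $d^{1/p-1/2}$ gives the second inequality of item 1. From $\norm{v}_p \leq d^{1/p-1/2}\normt{v}$ we get $\normq{Xv}/\normt{v} \leq d^{1/p-1/2}\,\normq{Xv}/\norm{v}_p \leq d^{1/p-1/2}\normpq{X}$, which gives the first inequality.

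There is no real obstacle here. The one point needing care is getting the direction of each norm comparison right, together with its exponent. I would double-check this on the extreme vectors: $e_1$ attains the lower bound on $\norm{v}_p/\normt{v}$ when $p<2$, and the all-ones vector attains the upper bound. These extremal vectors also show that the constants are tight, which is consistent with the stated factors. Note that $q$ plays no role: evenness is needed only for the algorithmic theorem, not for this fact.
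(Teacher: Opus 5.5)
Your proposal is correct and takes essentially the same approach as the paper. Both arguments insert the standard comparisons $\normt{v} \leq \norm{v}_p \leq d^{1/p-1/2}\normt{v}$ (for $p \leq 2$) and $\norm{v}_p \leq \normt{v} \leq d^{1/2-1/p}\norm{v}_p$ (for $p \geq 2$) into the ratios defining the two operator norms and compare denominators; the only cosmetic difference is that the paper evaluates at fixed maximizers $v_2, v_p$, whereas you take suprema over all $v \neq 0$.
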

We will also use the following standard estimates to compare the $\ell_p$ and $\ell_2$ norm.

\begin{fact}
    \label{fact:pvsq}
    Let $x \in \R^d$ and $p \geq 1$.
    \begin{enumerate}
        \item If $p \leq 2$, then $d^{1/2 - 1/p}\norm{x}_p \leq  \normt{x} \leq \norm{x}_p$.
        \item If $p \geq 2$, then $\norm{x}_p \leq  \normt{x} \leq d^{1/2 - 1/p}\norm{x}_p$.
    \end{enumerate}
\end{fact}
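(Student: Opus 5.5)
The plan is to derive both items from two elementary facts about $\ell_r$ norms on $\R^d$, applied with the pair $\{r,s\}=\{p,2\}$. For $1\leq r\leq s$ and $x\in\R^d$, the two facts are
\[
\norm{x}_s \leq \norm{x}_r \qquad\text{and}\qquad \norm{x}_r \leq d^{1/r-1/s}\,\norm{x}_s \,.
\]
I would prove these two facts first.

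\textbf{Monotonicity.} Both sides are homogeneous of degree one, so the case $x=0$ is trivial. For $x\neq 0$, normalize so that $\norm{x}_r=1$. Then $\abs{x_i}\leq 1$ for every coordinate, hence $\abs{x_i}^s\leq \abs{x_i}^r$. Summing gives $\norm{x}_s^s\leq \norm{x}_r^r = 1$, so $\norm{x}_s\leq 1=\norm{x}_r$.

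\textbf{Reverse bound.} I would apply H\"older's inequality to $\sum_{i=1}^d \abs{x_i}^r\cdot 1$ with conjugate exponents $s/r$ and $(1-r/s)^{-1}$. This gives
\[
\sum_{i=1}^d \abs{x_i}^r \leq \Paren{\sum_{i=1}^d \abs{x_i}^s}^{r/s} d^{1-r/s} \,.
\]
Taking $r$-th roots yields $\norm{x}_r\leq d^{1/r-1/s}\norm{x}_s$. The edge case $s=r$ is trivial.

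\textbf{Case $p\leq 2$.} Take $r=p$ and $s=2$. Monotonicity gives $\normt{x}\leq\norm{x}_p$. The reverse bound gives $\norm{x}_p\leq d^{1/p-1/2}\normt{x}$; multiplying both sides by $d^{1/2-1/p}$ gives $d^{1/2-1/p}\norm{x}_p\leq\normt{x}$, which is the first item.

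\textbf{Case $p\geq 2$.} Take $r=2$ and $s=p$. Monotonicity gives $\norm{x}_p\leq\normt{x}$, and the reverse bound gives $\normt{x}\leq d^{1/2-1/p}\norm{x}_p$, which is the second item.

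There is no real obstacle here. The only points needing care are tracking the sign of the exponent $1/2-1/p$ in each case, and the degenerate cases $x=0$ and $p=2$, which are immediate.
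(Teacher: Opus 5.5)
Your proof is correct. Monotonicity of $\ell_r$ norms together with H\"older's inequality against the all-ones vector gives both items, and you handle the sign of $1/2-1/p$ correctly in each case. The paper gives no proof of this fact and simply invokes it as a ``standard estimate,'' so your argument is the standard justification it implicitly relies on.
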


\begin{proof}[Proof of \cref{thm:p_to_q_full}]
    We split the proof into two cases, depending on whether $p$ is at most or at least 2.
    The two parts are almost identical, except that the inequalities in Fact~\ref{fact:pq_interpolation} are slightly different.
    The algorithm is simple: It invokes the algorithm in \cref{thm:two_to_q_full} and forwards the list $L$ it outputs.
    We only show that this list satisfies 
    \[
        \normpq{X} \leq d^{\gamma_p + 1/4 - 1/(2q)} \cdot \max_{\hat{v} \in L} \frac{\normq{X \hat{v}}}{\norm{\hat{v}}_p} \leq d^{\gamma_p + 1/4 - 1/(2q)} \cdot  \normpq{X} \,.
    \]
    The last part of the theorem is identical to \cref{thm:two_to_q_full}.
    We know that $L$ satisfies\footnote{Here we multiply the guarantee in \cref{thm:two_to_q_full} by $n^{1/q}$ to switch from expectation norms to regular norms.}
    \[
        \normtwoq{X} \leq d^{1/4 - 1/(2q)} \cdot \max_{\hat{v} \in L} \frac{\normq{X \hat{v}}}{\normt{\hat{v}}} \leq d^{1/4 - 1/(2q)} \cdot  \normtwoq{X} 
    \]

    We start with $p \leq 2$.
    In this case $\gamma_p = 1/p - 1/2$
    Using Facts~\ref{fact:pq_interpolation} and \ref{fact:pvsq} we obtain
    \begin{align*}
        \normpq{X} &\leq \normtwoq{X} \leq d^{1/4 - 1/(2q)} \cdot \max_{\hat{v} \in L} \frac{\normq{X \hat{v}}}{\normt{\hat{v}}} = d^{1/4 - 1/(2q)} \cdot \max_{\hat{v} \in L} \frac{\normq{X \hat{v}}}{\norm{\hat{v}}_p} \frac{\norm{\hat{v}}_p}{\normt{\hat{v}}} \\
        &\leq d^{\gamma_p} d^{1/4 - 1/(2q)} \cdot \max_{\hat{v} \in L} \frac{\normq{X \hat{v}}}{\norm{\hat{v}}_p} \leq d^{\gamma_p + 1/4 - 1/(2q)} \cdot \normpq{X} \,.
    \end{align*}
     
    The proof of $p \geq 2$ is completely analogous.
    In this case $\gamma_p = 1/2 - 1/p$
    \begin{align*}
        \normpq{X} &\leq d^{\gamma_p}\normtwoq{X} \leq d^{\gamma_p + 1/4 - 1/(2q)} \cdot \max_{\hat{v} \in L} \frac{\normq{X \hat{v}}}{\normt{\hat{v}}} = d^{\gamma_p + 1/4 - 1/(2q)} \cdot \max_{\hat{v}} \frac{\normq{X \hat{v}}}{\norm{\hat{v}}_p} \frac{\norm{\hat{v}}_p}{\normt{\hat{v}}} \\
        &\leq d^{\gamma_p + 1/4 - 1/(2q)} \cdot \max_{\hat{v} \in L} \frac{\normq{X \hat{v}}}{\norm{\hat{v}}_p} \leq d^{\gamma_p + 1/4 - 1/(2q)} \cdot \normpq{X} \,.
    \end{align*}
    
\end{proof}

\paragraph{Proof of Fact~\ref{fact:pq_interpolation}}
It remains to prove Fact~\ref{fact:pq_interpolation}, this is straightforward using Fact~\ref{fact:pvsq}.

\begin{proof}
        We start with the case $p \leq 2$.
        Let $v_2$ and $v_p$ be the maximizers of $\normtwoq{X}$ and $\normpq{X}$ respectively.
        Then it follows using the first case of Fact~\ref{fact:pvsq} twice that
        \begin{align*}
            \normtwoq{X} &= \frac{\normq{X v_2}}{\normt{v_2}} = \frac{\normq{X v_2}}{\norm{v_2}_p}\cdot \frac{\norm{v_2}_p}{\normt{v_2}} \leq d^{1/p - 1/2} \cdot \normpq{X}  \\
            &= d^{1/p - 1/2} \cdot \frac{\normq{X v_p}}{\norm{v_p}_p} \cdot \frac{\normt{v_p}}{\normt{v_p}} = d^{1/p - 1/2} \cdot \frac{\normq{X v_p}}{\norm{v_p}_2} \cdot \frac{\normt{v_p}}{\norm{v_p}_p} \\
            &\leq d^{1/p - 1/2} \cdot \normtwoq{X} \,.
        \end{align*}

        The $p \geq 2$ case is completely analogous using the second case of Fact~\ref{fact:pvsq}.
        \begin{align*}
            \normtwoq{X} &= \frac{\normq{X v_2}}{\normt{v_2}} = \frac{\normq{X v_2}}{\norm{v_2}_p}\cdot \frac{\norm{v_2}_p}{\normt{v_2}} \leq \normpq{X}  \\
            &= \frac{\normq{X v_p}}{\norm{v_p}_p} \cdot \frac{\normt{v_p}}{\normt{v_p}} = \frac{\normq{X v_p}}{\norm{v_p}_2} \cdot \frac{\normt{v_p}}{\norm{v_p}_p} \\
            &\leq d^{1/2 - 1/p} \cdot \normtwoq{X} \,.
        \end{align*}
    \end{proof}

\phantomsection
\addcontentsline{toc}{section}{References}
\bibliographystyle{amsalpha}
\bibliography{bib/custom,bib/dblp,bib/mathreview,bib/scholar}

\appendix
\clearpage

\section{Further Analysis of the Certificate of Guth, Maldague, and Urschel}
\label{sec:guth_limitation}
In this section we show that when $n \gtrsim d^3$ using the rows of the matrix and the top right singular vector of the covariance matrix does not certify a value better than $\Omega(d^{1/4})$.
This shows that the sparsification step is necessary to obtain a $d^{1/6}$-approximation via this list.
We remark that \cite{guth2025estimating} already contains an example showing that taking only this set of proxy directions cannot certify a better value than $n^{1/12}$, however in this example $n \approx d$ and thus this does not a priori rule out a scaling of $d^{1/12}$ for \emph{all} $n$.
We show that this is not the case.
Specifically, we show the following.
\begin{theorem}
    \label{thm:guth_limitation}
    Let $n \geq C d^3$ and $d \geq C$ for a large enough absolute constant $C$.
    There exists a matrix $X \in \R^{n \times d}$ with rows $x_1, \ldots, x_n$ and top right-singular vector $u$ such that
    \begin{enumerate}
    \item $\E_i \iprod{x_i,u}^4 \leq O(1)$.
    \item $\E_i \iprod{x_i, \bar{x}_j}^4 \leq O(1)$ for all $j = 1, \ldots, n$.
    \item There exists a unit vector $v$ such that $\E_i \iprod{x_i,v}^4 \geq d$.
\end{enumerate}
\end{theorem}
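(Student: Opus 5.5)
The plan is an explicit planted construction. The $n$ rows split into a large isotropic bulk and a small planted group of $k = d^2$ rows. Every planted row carries a common component along a hidden unit vector $v$, but is tilted far away from $v$ by a large individual component. Then no normalized row is close to $v$, while the $k$ planted rows together make $\E_i\iprod{x_i,v}^4$ large. Separately, a fixed spike in the bulk pins the top singular vector away from $v$.

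\textbf{Parameters and construction.} Fix orthonormal $e_1, v$. Set
\[
L = n^{1/4}, \qquad a = (n/d)^{1/4}, \qquad k = d^2 .
\]
\begin{itemize}
\item \emph{Bulk rows} ($n-k$ of them): take rows of a matrix with exactly isotropic second moments, e.g.\ independent $\pm1$ vectors (or a rotated Hadamard design), and rescale the $e_1$-coordinate by $2$. The bulk then has second moment $I + 3e_1e_1^\top$, rows of norm $\Theta(\sqrt d)$, and $O(1)$ fourth moments in every direction.
\item \emph{Planted rows}: set $x_j = a v \pm L\, \bar g_j$, where $\bar g_j$ are uniformly random unit vectors in $\{v,e_1\}^\perp$, and the $\pm$ signs come in symmetric pairs.
\end{itemize}
The pairing kills the cross terms $a(v \bar g_j^\top + \bar g_j v^\top)$ in the second-moment matrix. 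The planted part therefore contributes $(k/n)\bigl(a^2 vv^\top + L^2 G\bigr)$, where $G$ acts on $\{v,e_1\}^\perp$ and has operator norm $O(1/d)$ with high probability.

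\textbf{Item 3 and the top singular vector (item 1).}
\begin{itemize}
\item Item 3: $\E_i \iprod{x_i,v}^4 \geq (k/n)a^4 = d$.
\item The full second-moment matrix is block diagonal: one block on $\mathrm{span}(e_1)$, one on $\mathrm{span}(v)$, and one on the rest.
  \begin{itemize}
  \item Its eigenvalue on $e_1$ is $4 - O(k/n)$.
  \item On $v$ it is at most $1 + a^2k/n = 1 + d^{3/2}n^{-1/2} \le 1 + C^{-1/2}$, using $n \ge Cd^3$.
  \item On the rest it is at most $1 + kL^2/(nd) = 1 + o(1)$.
  \end{itemize}
\item Hence the top right-singular vector is exactly $u = e_1$.
\item $\E_i\iprod{x_i,e_1}^4 = O(1)$, since the planted rows are orthogonal to $e_1$.
\end{itemize}

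\textbf{Normalized rows (item 2).} I split $\E_i \iprod{x_i,\bar x_j}^4$ into bulk and planted contributions.
\begin{itemize}
\item The bulk contribution is $O(1)$ for every unit vector, by bounded bulk fourth moments. This needs a union bound over the $n$ directions $\bar x_j$ if the bulk is random, or a deterministic design.
\item For $\bar x_j$ a bulk row, the planted contribution is $(k/n)\, O(a^4\iprod{v,\bar x_j}^4 + L^4\iprod{\bar g,\bar x_j}^4)$. With $\iprod{\cdot,\bar x_j}^2 \lesssim \log n/d$ this is $O(1)$, because $kL^4/(nd^2) = 1$.
\item For $\bar x_j$ a planted row, $\norm{x_j} \approx L$, and $\bar x_j = (a v \pm L \bar g_j)/\norm{x_j}$.
  \begin{itemize}
  \item The self term (and its $\pm$ partner) gives $(1/n)\norm{x_j}^4 \approx L^4/n = 1$.
  \item For other planted rows $i$, $\iprod{x_i,\bar x_j} \approx (a^2 \pm L^2\iprod{\bar g_i,\bar g_j})/L$.
  \item The first piece contributes $(k/n)\,a^8/L^4 = k/d^2 = 1$.
  \item The random piece has $\iprod{\bar g_i,\bar g_j} = \tilde O(d^{-1/2})$, so it contributes $\tilde O\bigl((k/n)\,L^4/d^2\bigr) = \tilde O(1)$. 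Averaging the fourth powers, rather than taking the worst case, removes the log.
  \end{itemize}
\end{itemize}

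\textbf{Main obstacle.} The delicate step is item 2 for planted directions. It must hold uniformly over all $n$ rows, so I need concentration of $\frac1k\sum_i \iprod{\bar g_i,\bar g_j}^4 = O(1/d^2)$ simultaneously for all $j$, rather than only a typical bound. For this I would use a degree-4 polynomial concentration bound together with a union bound, using $k=d^2 \gg \log n$. Alternatively, I would choose the $\bar g_j$ from a deterministic design, e.g.\ a $4$-design or rows of a suitably sized Hadamard-type frame, so that these sums are exactly $\Theta(1/d^2)$.
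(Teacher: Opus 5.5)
Your plan works in outline, but one step of item~2 fails as written. For $\bar x_j$ a bulk row, you bound the random planted piece with the worst-case estimate $\iprod{\bar g_i,\bar x_j}^2\lesssim \log n/d$. That gives
\[
\tfrac{k}{n}\,L^4\cdot\tfrac{\log^2 n}{d^2}=\tfrac{kL^4}{nd^2}\,\log^2 n=\log^2 n ,
\]
not $O(1)$: the identity $kL^4/(nd^2)=1$ does not absorb the logarithms. The $v$-term has the same problem, since it is $d\,\iprod{v,\bar x_j}^4\lesssim \log^2 n/d$.

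What you need is the averaged bound $\frac1k\sum_{i}\iprod{\bar g_i,\bar x_j}^4\lesssim d^{-2}$ for all $n-k$ bulk directions, not only for the planted ones. A per-direction concentration bound plus a union bound does not give this. At any fixed constant level, the failure probability of this degree-$4$ average cannot be pushed below $e^{-O(d)}$, because a bounded number of nearly aligned $\bar g_i$ already exceed it. So the union bound needs $\log n\lesssim d$, whereas $n\ge Cd^3$ places no upper bound on $n$; your condition $d^2\gg\log n$ is not implied either.

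The repair is the uniform-over-all-directions fourth-moment bound for the isotropic sub-Gaussian vectors $\sqrt d\,\bar g_i$ at sample size $k\gtrsim d^2$. This is the same concentration result the paper invokes for its $r\gtrsim d^2$ planted Gaussians. It gives $\frac1k\sum_i\iprod{\bar g_i,w}^4\lesssim d^{-2}$ for every unit $w$ at once, so it covers planted and bulk $\bar x_j$ simultaneously. You may need $k=C_0d^2$, which only changes constants (the $v$-eigenvalue becomes $1+\sqrt{C_0/C}$). For the $v$-term, take $v$ to be a coordinate direction of the $\pm1$ bulk, so that $\iprod{v,x_j}=\pm1$ and the term is $O(1/d)$.

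Two smaller points. First, a (rotated) $d\times d$ Hadamard design is not a valid bulk: along one of its own normalized rows its fourth moment is of order $d$, which violates item~2 directly. Second, with random $\pm1$ rows instead, the bulk covariance equals $I+3e_1e_1^\top$ only up to $O(\sqrt{d/n})$, so $u$ is not exactly $e_1$. The spectral gap is at least $2$, however, so $\norm{u-e_1}\lesssim\sqrt{d/n}$ and
\[
\E_i\iprod{x_i,u}^4\lesssim 1+\norm{u-e_1}^4\,\E_i\normt{x_i}^4\lesssim 1+d^4/n^2 ,
\]
and item~1 survives.

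With these repairs your construction is a genuine alternative to the paper's. Your constraints force $L^4\asymp nd^2/k$ and allow any $d^2\lesssim k\lesssim n/d$. The paper sits at the top end $k=n/d$: there $a=\sqrt d$ and $L=d^{3/4}$, the spike is shared by all rows, and Davis--Kahan is needed for item~1. You sit at the bottom end $k=d^2$. There the pairing and the bulk-only spike make item~1 essentially free, but the concentration step is exactly at its sample-size threshold.
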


Our construction is probabilistic and we show that it succeeds with non-zero, in fact high, probability.
We emphasize that doing the error analysis carefully, instead of say just giving a proof sketch using `in-expectation' arguments, is important since we know the statement must be false when $n$ is much less than $d^3$.

Intuitively, a small fraction of the rows will be chosen to be very aligned with a specific direction $v$, say $e_1$, but since this fraction is small enough, we will only see this in the fourth moments instead of second moments.
We will also plant a (much smaller) spike in an orthogonal direction which will make sure that the top right singular vector of $X$ is orthogonal to $e_1$.
Indeed we can balance terms such that $\E_i x_i x_i^\top$ roughly looks like $I_d + e_2e_2^\top$. 
Further, by carefully choosing the norm of these rows, we can prevent the normalized row proxy directions to have large fourth moments. 

\begin{proof}

More formally, we do the following.
For simplicity assume that $r = n/d$ is an integer, it will be crucially important that $r \gtrsim d^2$.

\paragraph{The construction}
Let $Z_1, \ldots, Z_n \in \R^{d-1}$ be \iid samples from $N(0,I_{d-1} + e_1 e_1^\top)$.
Let $s_1, \ldots, s_r$ be independent (from the $Z_i$ and each other) and uniform over $\Set{-1,+1}$.
We set
\[
    x_i = \begin{cases}
        \sqrt{d}\cdot(s_i, d^{-1/4} Z_i) = (s_i \sqrt{d}, d^{1/4} Z_i)&\quad, i = 1, \ldots, r \,,\\
        (0, Z_i) &\quad, i = r+1, \ldots, n \,.
    \end{cases}
\]

It is easy to verify that $v = e_1$ induces a large four norm:
\begin{align*}
    \E_i \iprod{x_i,e_1}^4 \geq \tfrac 1 d \cdot \E_i \Brac{\iprod{x_i,e_1}^4 \mid i \leq r} = \tfrac 1 d \cdot d^2 = d \,.
\end{align*}

\paragraph{Basic observations}

We continue with some basic observations.
With overwhelming probability it will hold that for all $i$, $\normt{Z_i} = \Theta(\sqrt{d})$ and thus
\[
    \normt{x_i} = \begin{cases}
         \sqrt{d} \sqrt{1 + \Theta(d^{1/4})} = \Theta(d^{3/4})&\quad, i = 1, \ldots, r \,,\\
        \Theta(\sqrt{d}) &\quad, i = r+1, \ldots, n \,.
    \end{cases}
\]
In particular, it follows that
\[
    \E_i \normt{x_i}^4 \lesssim \tfrac 1 d \cdot d^{3} + \Paren{1 - \tfrac 1 d} \cdot d^2 \lesssim d^2 \,.
\]

The crucial observation, and where we use that $n > d^3$, is that the fourth moments of the distribution of $Z_i$ concentrate in every direction, even \emph{after} conditioning on $i \leq r$ or $i > r$.
Specifically, since $n \geq r \gtrsim d^2$ for any unit vector $v$, it holds that \cite{adamczak2010quantitative,vershynin2011approximating}, where $Z \sim N(0,I_{d-1} + e_2 e_2^\top)$
\[
    \E_i \Brac{\iprod{Z_i,v}^4 \mid i \leq r} \lesssim \E \iprod{Z,v}^4 \lesssim 1\,,
\]
and the same when we condition on $i > r$.
Note that the first concentration would not hold when $n$ is much smaller than $d^3$ (and hence $r \ll d^2$).

\paragraph{The top singular vector is close to $e_2$}

Let $u$ be the top singular vector of $X$.
We claim that with high probability, $u$ is very close to $e_2$.
Let $\hat{\Sigma} = \E_i x_i x_i^\top$.
Note that the top right singular vector is the same as the top eigenvector of $\hat{\Sigma}$.
We will show that with high probability there is a matrix $\Sigma$ such that
\begin{enumerate}
    \item $\normop{\hat{\Sigma}- \Sigma} \lesssim \sqrt{d/r}$,
    \item the top eigenvector of $\Sigma$ is equal to $e_2$,
    \item $\Sigma$ has a spectral gap of at least $0.99$.
\end{enumerate}
Standard perturbation arguments (see, e.g., \cite{yu2015useful}) then show that $\normt{u - e_2} \lesssim \sqrt{d/r}$.

Note that $x_i$ for $i > r$ are \iid distributed as $(0,Z)$ for $Z \sim N(0,I_{d-1})$.
Note that this distribution is $O(1)$-sub-Gaussian and has covariance matrix $\Sigma_2 = I_d - e_1 e_1 + e_2 e_2^\top$.
Let $\hat{\Sigma}_2 = \E_i \Brac{x_i x_i^\top \vert i > r}$.
Since $n-r \geq d$ it follows that (see, e.g., \cite{vershynin2018high})
\[
    \normop{\hat{\Sigma}_2 -\Sigma_2} \lesssim \sqrt{\frac{d}{n-r}} \lesssim \sqrt{\frac{d}{n}} \,.
\]

Similarly, $d^{-1/2} \cdot x_i$ for $i \leq r$ are \iid distributed as $(s, d^{-1/4} Z)$ where $Z$ is as before and $s$ is uniform over $\Set{-1,+1}$ and independent of $Z$.
This distribution is also $O(1)$-sub-Gaussian and has covariance matrix $\Sigma_1 = e_1 e_1^\top + \tfrac{1}{\sqrt{d}} \Sigma_2$.
Letting $\hat{\Sigma}_1 = \tfrac 1 d \cdot \E_i \Brac{x_i x_i^\top \vert i \leq r}$
Again, we obtain that $\normop{\hat{\Sigma}_1 - \Sigma_1} \lesssim \sqrt{d/r}$.

Finally, let $\Sigma = \Sigma_1 + (1-\tfrac 1 d) \Sigma_2$, then it follows by triangle inequality, that
\[
    \Normop{\hat{\Sigma} -  \Sigma} = \Normop{\hat{\Sigma}_1 + \Paren{1- \tfrac 1 d} \hat{\Sigma}_2 - \Sigma} \lesssim \sqrt{\frac d r} \,.
\]
In addition, since $\normop{\Sigma_2} \lesssim 1$ and by choosing $d$ large enough,
\begin{align*}
    \Normop{\Sigma - \Paren{I_d + e_2 e_2^\top}} &= \Normop{e_1 e_1^\top + \tfrac 1 {\sqrt{d}} \Sigma_2 + I_d - e_1 e_1^\top + e_2 e_2^\top - \tfrac 1 d \Sigma_2 - \Paren{I_d + e_2 e_2^\top}} \\
    &\leq \Normop{\tfrac 1 {\sqrt{d}} \Sigma_2 - \tfrac 1 d \Sigma_2}  \leq 0.01\,,
\end{align*}
which implies that the top eigenvector is $e_2$ and the spectral gap is at least $1/2$.

\paragraph{The top singular vector induces a small four norm}
We next use this to show that $\E_i \iprod{x_i, u}^4 \lesssim 1$.
Indeed,
\[
    \E_i \iprod{x_i, u}^4 \lesssim \E_i \iprod{x_i, e_2}^4 + \E_i \iprod{x_i, u-e_2}^4 \leq \E_i \iprod{x_i, e_2}^4  + \normt{u - e_2}^4 \cdot \E_i \normt{x_i}^4 \leq \E_i \iprod{x_i, e_2}^4 + \frac{d^4}{r^2} \,.
\]
The second term is $O(1)$ because $n \gtrsim d^3$ and hence $r^2 = (n/d)^2 \gtrsim d^4$.
For the first term, we bound
\begin{align*}
    &\tfrac{1}{d} \cdot \E_i \Brac{\iprod{x_i, e_2}^4 \mid i \leq r} + \Paren{1-\tfrac{1}{d}} \cdot \E_i \Brac{\iprod{x_i, e_2}^4 \mid i > r}  \\
    &= \tfrac 1 d \cdot d \cdot \E_i \Brac{Z_i(1)^4 \vert i \leq r} + \Paren{1-\tfrac{1}{d}} \cdot\E_i \Brac{Z_i(1)^4 \vert i > r}  \,,
\end{align*}
which is at most $O(1)$ by concentration of the empirical distributions of $Z_i$ conditioned on both $i \leq r$ and $i > r$.

\paragraph{Normalized rows induce small four norm}

We start with $j > r$.
It follows using concentration of the fourth moments of $Z_i$ conditioned on $i \leq r$ and $i > r$ that
\begin{align*}
    \E_i \iprod{x_i, \bar{x}_j}^4  &= \tfrac 1 d \cdot \E_i \Brac{\iprod{(s_i \sqrt{d}, d^{1/4} Z_i), (0,\bar{Z}_j)}^4 \mid i \leq r} + \Paren{1 - \tfrac{1}{d} } \cdot \E_i \Brac{\iprod{(0, Z_i), (0,\bar{Z}_j)}^4 \mid i > r} \\
    &\leq \E_i \Brac{\iprod{Z_i,\bar{Z}_j}^4 \mid i \leq r} + \E_i \Brac{\iprod{Z_i,\bar{Z}_j}^4 \mid i > r} \lesssim 1 + \frac{\norm{Z_j}^4}{n-r} \lesssim 1 + \frac{d^2}{n} \lesssim 1 \,.
\end{align*}

Next we consider $j \leq r$.
Similarly to above we know that $\E_i \brac{ \iprod{x_i, \bar{x}_j}^4 \mid i > r} \lesssim 1$.
Thus, it is enough to show the same for $\tfrac 1 d \cdot \E_i \brac{ \iprod{x_i, \bar{x}_j}^4 \mid i \leq r}$.
We can bound using $\norm{x_j} = \Theta(d^{3/4})$
\begin{align*}
    \tfrac 1 d \cdot \E_i \brac{ \iprod{x_i, \bar{x}_j}^4 \mid i \leq r} &= \tfrac 1 {d \normt{x_j}^4} \cdot \E_i \Brac{ \Paren{s_i s_j d + \sqrt{d} \iprod{Z_i,Z_j}}^4 \mid i \leq r} \\
    &\lesssim \tfrac 1 {d \normt{x_j}^4} \cdot \Paren{d^4 + d^2 \cdot \E_i \brac{  \iprod{Z_i,Z_j}^4 \mid i \leq r}} \\
    &\lesssim 1 + \tfrac 1 {d^2} \cdot \E_i \brac{  \iprod{Z_i,Z_j}^4 \mid i \leq r} \,.
\end{align*}
The second part we can bound using $r \geq d^2$as
\[
    \frac{1}{d^2} \cdot \Paren{\frac{\normt{Z_j}^8}{r} + \normt{Z_j}^4} \lesssim 1 \,.
\]

\end{proof}

\section{Various Missing Pieces}
\label{sec:explanations}
\subsection{Further Analysis of the Baseline Certificate}
\label{sec:baseline}

We will show the following theorem.
The upper bound follows using the same approach discussed in \cref{sec:intro}.
\begin{theorem}
    \label{thm:baseline}
    Let $q \in \N$ be even and $\alpha > 0$ and let $X \in \R^{n \times d}$ satisfy $\normtwoqexp{X} \leq \alpha$. Then $\normop{\E_i x_i^{\otimes q/2} [x_i^{\otimes q/2}]^\top} \leq \alpha^q \cdot d^{q/4}$.
    Further, when $q$ is divisible by 4\footnote{This assumption is mostly for convenience and we believe that the same lower bound holds for all even $q$.} and $n \gtrsim d^{q/2}$ there exist matrices $X$ such that $\normtwoqexp{X} \leq O_q(1)$ but $\normop{\E_i x_i^{\otimes q/2} [x_i^{\otimes q/2}]^\top} \geq \Omega_q(d^{q/4})$.
\end{theorem}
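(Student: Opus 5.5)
The theorem has two parts, and I would prove them separately: an upper bound on the baseline certificate, and a Gaussian example showing that bound is tight. I expect the upper bound for general even $q$ (not just $q=4$) to be the main obstacle, and I do not yet have a complete argument for that step; it is marked below.

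\emph{Upper bound.} I would generalize the $q=4$ argument of \cref{sec:tech_overview}. Fix a unit vector $w\in(\R^d)^{\otimes q/2}$. We must show $\E_i\iprod{x_i^{\otimes q/2},w}^2\le \alpha^q d^{q/4}$. Projecting $w$ onto the symmetric subspace leaves the quadratic form unchanged and does not increase its norm, so $w$ may be taken symmetric.

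When $4\mid q$, I would reshape $w$ as a symmetric $d^{q/4}\times d^{q/4}$ matrix $W=\sum_r\lambda_r u_ru_r^\top$ with $\sum_r\lambda_r^2=1$. Cauchy--Schwarz, exactly as for $q=4$, gives
\[
\E_i\iprod{x_i^{\otimes q/2},w}^2=\E_i\Big(\sum_r\lambda_r\iprod{x_i^{\otimes q/4},u_r}^2\Big)^2\le\sum_r \E_i\iprod{x_i^{\otimes q/4},u_r}^4 .
\]
It then suffices to show $\sum_r \E_i\iprod{x_i^{\otimes q/4},u_r}^4\le \alpha^q d^{q/4}$ for every orthonormal basis $\{u_r\}$ of $(\R^d)^{\otimes q/4}$.

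This is the step I expect to be the main obstacle.
\begin{itemize}
\item The naive bound $\sum_r\iprod{y,u_r}^4\le\normt{y}^4$ together with $\E_i\normt{x_i}^q\le\alpha^qd^{q/2}$ (the AM--GM computation in the proof of \cref{thm:two_to_q_full}) loses a factor $d^{q/4}$.
\item Bounding each of the $d^{q/4}$ terms individually is not available unless every $u_r$ is a rank-one tensor $a_1\otimes\cdots\otimes a_{q/4}$.
\item What I would try first is a rank-one reduction. Since $\sum_r\E_i\iprod{x_i^{\otimes q/4},u_r}^4$ is basis-dependent, I would use that $d^{-q/4}\sum_r u_r^{\otimes 4}$ is dominated, as a form on symmetric tensors, by an average of $(a^{\otimes q/4})^{\otimes 4}$ over unit $a$. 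If so, the sum is at most $d^{q/4}\cdot C_q\sup_{\normt{a}=1}\E_i\iprod{x_i,a}^q\le C_q\alpha^qd^{q/4}$.
\item This domination is the claim I have not verified. If it fails, I would fall back to an induction on the tensor order, applying H\"older's inequality one mode at a time.
\end{itemize}
For $q\equiv 2\pmod 4$, I would reshape $w$ as a rectangular $d^{a}\times d^{b}$ matrix with $a+b=q/2$, use its singular value decomposition, and run the same argument with H\"older in place of Cauchy--Schwarz.

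\emph{Lower bound.} I would take $x_1,\dots,x_n$ \iid from $N(0,I_d)$ with $n\ge C_q d^{q/2}$.
\begin{itemize}
\item \emph{The $2\to q$ norm is small.} Concentration of empirical $q$-th moments of Gaussians uniformly over the sphere, as in \cite{adamczak2010quantitative} and the citations used in \cref{sec:guth_limitation}, gives $\sup_{\normt{v}=1}\E_i\iprod{x_i,v}^q\le O_q(1)$ with high probability once $n\gtrsim_q d^{q/2}$. Hence $\normtwoqexp{X}\le O_q(1)$.
\item \emph{The certificate is large.} Take the test vector $w=d^{-q/8}\,\mathrm{vec}(I_d)^{\otimes q/4}$, which is a unit vector; this is where $4\mid q$ is used. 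Then $\iprod{x_i^{\otimes q/2},w}=d^{-q/8}\normt{x_i}^{q/2}$.
\item \emph{Conclusion.} This gives $w^\top\big(\E_i x_i^{\otimes q/2}[x_i^{\otimes q/2}]^\top\big)w=d^{-q/4}\E_i\normt{x_i}^q$. Since $\normt{x_i}^2=\Theta(d)$ for all $i$ with high probability, this equals $\Omega_q(d^{q/4})$. Therefore the operator norm is at least $\Omega_q(d^{q/4})$, as required.
\end{itemize}
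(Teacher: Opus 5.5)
\textbf{Verdict.} Your lower bound is the same as the paper's: Gaussian rows and a test vector proportional to $\mathrm{vec}(I_d)^{\otimes q/4}$. The upper bound, however, has a genuine gap, and the route you sketch cannot close it.

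\textbf{Why the proposed route fails.} Read with the uniform measure on the sphere, your domination step is false. Take $n=1$ and an orthonormal basis containing $u_1=\bar{x}_1^{\otimes q/4}$. Pairing the left side with $x_1^{\otimes q}$ gives at least $d^{-q/4}\normt{x_1}^q$, while $C_q\E_a\iprod{a,x_1}^q\le C_q(q-1)!!\,d^{-q/2}\normt{x_1}^q$. If instead you allow a basis-dependent measure, a minimax argument shows such a measure exists exactly when your basis-sum inequality already holds for all data, so the step is circular.

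The real obstruction is your Cauchy--Schwarz step over $\lambda$. It replaces a top-eigenvalue quantity by a trace $\sum_r\E_i\iprod{x_i^{\otimes q/4},u_r}^4$ over an entire basis.
\begin{itemize}
\item For $q=4$ this is harmless, since each term is at most $\alpha^4$.
\item Already for $q=8$, a single term can be of order $\alpha^8 d^2$: take $u=\mathrm{vec}(I_d)/\sqrt d$ and Gaussian rows. So a sum of $d^{q/4}$ such terms has no reason to be $O(\alpha^q d^{q/4})$.
\item The inequality you reduce to looks false, not just harder. Take a uniform mixture of scaled Gaussians $N(0,P_g)$ on $m\approx d^2$ independent random subspaces of dimension $\approx\sqrt d$, and build the $u_r$ by orthonormalizing the traceless parts of the $P_g$. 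A heuristic computation gives a sum of order $\alpha^8 d^3$.
\end{itemize}
Your fallbacks (mode-by-mode H\"older, and the rectangular reshaping for $q\equiv 2 \pmod 4$) are not specified enough to repair this.

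\textbf{The missing idea.} It is much simpler: bound the operator norm by the Frobenius norm. Since $\iprod{x_i^{\otimes q/2},x_j^{\otimes q/2}}=\iprod{x_i,x_j}^{q/2}$,
\[
\normop{\E_i x_i^{\otimes q/2}[x_i^{\otimes q/2}]^\top}^2\le\E_{i,j}\iprod{x_i,x_j}^q=\E_i\normt{x_i}^q\,\E_j\iprod{\bar{x}_i,x_j}^q\le\alpha^q\,\E_i\normt{x_i}^q\le\alpha^{2q}d^{q/2}.
\]
The middle inequality applies the hypothesis to the unit vector $\bar{x}_i$. The last is the AM--GM bound $\E_i\normt{x_i}^q\le\alpha^q d^{q/2}$ that you already quoted. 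Taking square roots gives $\alpha^q d^{q/4}$ for every even $q$, with no symmetrization, reshaping, or case split on $q \bmod 4$.

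The point is that this argument uses the hypothesis twice: once in the data-dependent directions $\bar{x}_i$, and once through the row norms. Your naive estimate used it only through the row norms, which is exactly why it lost a factor $d^{q/4}$.
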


\begin{proof}
    To show the first part, we upper bound the spectral norm by the Frobenius norm.
    Recall that in the proof of \cref{thm:two_to_q_full} we showed that $\normtwoqexp{X} \leq \alpha$ implies that $\E_i \normt{x_i}^q \leq \alpha^q d^{q/2}$.
    Using this, it follows that
    \begin{align*}
        \Normop{\E_i x_i^{\otimes q/2} [x_i^{\otimes q/2}]^\top} &\leq \Normf{\E_i x_i^{\otimes q/2} [x_i^{\otimes q/2}]^\top} = \sqrt{\E_{i,j} \iprod{x_i,x_j}^q} = \sqrt{\E_i \normt{x_i}^q \cdot \E_j \iprod{\bar{x}_i, x_j}^q} \\
        &\leq \sqrt{\alpha^q \cdot \E_i \normt{x_i}^q} \leq \alpha^q \cdot d^{q/4} \,.
    \end{align*}

    To show the second part, we take the rows of $X$ to be \iid standard Gaussian.
    Then by standard concentration bounds it follows that with high probability \cite{adamczak2010quantitative,vershynin2011approximating}
    \[
        \sup_{\normt{v} = 1} \E_i \iprod{x_i,v}^q \lesssim \E_{Z \sim N(0,I_d)} \iprod{Z,v}^q \leq q^{q/2} = O_q(1) \,.
    \]
    Further, since $q$ is divisible by $4$ we can consider the test vector $\mathrm{Vec}(I_d^{\otimes q/4})$ which has Frobenius norm $d^{q/8}$.
    It follows that
    \begin{align*}
        \Normop{\E_i x_i^{\otimes q/2} [x_i^{\otimes q/2}]^\top} \geq d^{-q/4} \cdot \E_i \iprod{x_i^{\otimes q/2}, \mathrm{Vec}(I_d^{\otimes q/4})}^2 = d^{-q/4} \cdot \E_i \normt{x_i}^q \,.
    \end{align*}
    The last expectation is $\Omega_q(d^{q/2})$ with high probability, which proves the lower bound.
\end{proof}

\subsection{Models in Algorithmic Statistics}
\label{sec:alg_stats_models}

For a distribution with covariance $\Sigma$, the Mahalanobis norm is a vector-norm and refers to $\normt{\Sigma^{-1/2} (\cdot)}$.
The relative spectral norm is a matrix-norm and refers to $\normop{\Sigma^{-1/2}(\cdot )\Sigma^{-1/2}}$.

`Robust estimation' refers to the fact that an adversary can arbitrarily change an $\e$-fraction of the input data set, as described for robust mean estimation.
In the list-decodable model only a subset $S$ of size $\alpha n$ of the input satisfies the distributional assumption, the rest can be arbitrary.
This makes estimating the mean of $S$ impossible.
Instead, we aim to output a short list of vectors which contains one vector that is close to the mean of $S$.
In a Mixture Model, the samples can be grouped into $k$ separated components; for simplicity we assume that all groups have the same size.
In this problem, the relevant parameter is how large this separation needs to be so that we can faithfully cluster the data.
In all other problems, the success metric is estimating the relevant parameter up to small error.

\subsection{Tensoring Sum-of-Squares Lower Bounds}
\label{sec:tensoring_sos_lower_bound}

In this section, we give a short argument proving \cref{obs:SoS_lower_bound}.
It is based on the sum-of-squares lower bounds for constant factor approximations in \cite{harrow2019limitations} and a tensoring/parallel repetition argument in \cite{bhattiprolu2023inapproximability}.

\restateobs{obs:SoS_lower_bound}

\begin{proof}
    \cite[Corollary 4.9]{harrow2019limitations} shows that for $k$ and $N$ large enough, there is a matrix $X \in \R^{N \times k}$ and constants $s < c$ such that $\normtwofourexp{X} \leq s$, but the value of the degree $\log k / \poly (\log \log k)$ sum-of-squares relaxation is at least $c$.
    That is, there is a pseudo-expectation over unit vectors such that $\pE_v \E_i \iprod{x_i,v}^4 \geq c^4$.
    For simplicity, assume that $\ell = \tfrac{\log_{c/s} d}{p}$ is an integer.
    Set $n = N^\ell, d = k^\ell$ and consider the $n \times d$ matrix $\tilde{X}$ that has all rows of the form $x_{i_1} \otimes \ldots x_{i_\ell}$ for $i_j \in [N]$.
    By \cite[Theorem 4.11]{bhattiprolu2023inapproximability} it follows that $\normtwofourexp{\tilde{X}} \leq \normtwofourexp{X}^\ell \leq s^\ell$.

    Next, consider the pseudo-distribution $\pE'$ obtained by taking the $\ell$-fold product of $\pE$, i.e., with variables $v' \coloneqq v^{(1)} \otimes \ldots \otimes v^{(\ell)}$, where $v^{(j)}$ are independent copies of the variables of $\pE$.
    Note that the degree of $\pE'$ is still
    \[
        \frac {\log k} {\poly(\log \log k )} = \frac{\log (d) / \ell}{\poly (\log [\log (d) / \ell])} = \frac{p}{\poly(\log p)}\,.
    \]
    This is also a pseudo-distribution over unit vectors since $\normt{v'}^2 = \normt{v^{(1)}}^2 \cdot \ldots \cdot \normt{v^{(\ell)}}^2$.
    Lastly, it holds that
    \begin{align*}
        \pE'_{v'} \E_r \iprod{\tilde{X}_r, v'}^4 &= \pE_{v^{(1)}, \ldots, v^{(\ell)}} \E_{i_1, \ldots, i_\ell} \iprod{x_{i_1} \otimes \ldots \otimes x_{i_\ell}, v^{(1)} \otimes \ldots \otimes v^{(\ell)}}^4 = \pE_{v^{(1)}, \ldots, v^{(\ell)}} \E_{i_1, \ldots, i_\ell} \prod_{j=1}^\ell \iprod{x_{i_j}, v^{(j)}}^4 \\
        &= \prod_{j=1}^\ell \pE_{v^{(j)}} \E_{i_j} \iprod{x_{i_j}, v^{(j)}}^4 \geq c^{4\ell} \,.
    \end{align*}
    Thus, the gap of the tensored instance is at least $\Paren{\tfrac c s}^\ell = d^{1/p}$.

\end{proof}

\end{document}